\providecommand{\U}[1]{\protect\rule{.1in}{.1in}}
\def\A{\mathcal A}
\def\P{\mathcal P}
\def\S{\mathcal S}
\def\W{\mathcal W}
\def\RE{\mathbb{R}}
\newtheorem{theorem}{Theorem}
\newtheorem{corollary}[theorem]{Corollary}
\newtheorem{definition}[theorem]{Definition}
\newtheorem{lemma}[theorem]{Lemma}
\newtheorem{remark}[theorem]{Remark}
\newenvironment{proof}[1][Proof]{\noindent\textbf{#1.} }{\ \rule{0.5em}{0.5em}}
\begin{document}

\title{Quantum mappings acting by coordinate transformations on Wigner distributions}
\author{Nuno Costa Dias\textbf{\thanks{Corresponding author}} \textbf{\thanks{ncdias@meo.pt}} \and Jo\~{a}o Nuno
Prata\textbf{\thanks{joao.prata@mail.telepac.pt}}}
\maketitle

\begin{abstract}
We prove two results about Wigner distributions. Firstly, that the
Wigner transform is the only sesquilinear map $\S(\RE^n) \times
\S(\RE^n) \to \S(\RE^{2n})$ which is bounded and covariant under
phase-space translations and linear symplectomorphisms.
Consequently, the Wigner distributions form the only set of
quasidistributions which is invariant under linear symplectic
transformations. Secondly, we prove that the maximal group of
(linear or non-linear) coordinate transformations that preserves
the set of (pure or mixed) Wigner distributions consists of the
translations and the linear symplectic and antisymplectic
transformations.
\end{abstract}
\maketitle

MSC[2010]: {Primary 81S30, 47G30, 53D05; Secondary 47A07, 81P16}

Keywords: {Wigner distribution; quantum mapping, symplectic
covariance; Weyl operator}

\section{Introduction}

In quantum mechanics states are usually represented by {\it
density matrices}. These are positive, trace-class operators $\rho
: L^2(\RE^n) \to L^2(\RE^n)$ with unit trace. The Weyl symbol of
the density matrix operator $\rho$ is the Wigner function
\cite{Wong}:
\begin{equation} \label{eqIntroduction0}
\rho \overset{\mathrm{Weyl}%
}{\longleftrightarrow}W \rho (z) =  \int_{\mathbb{R}^n} K_\rho
\left(x + \frac{y}{2},x - \frac{y}{2} \right) e^{- 2 \pi i  \omega
\cdot y} dy \, ,
\end{equation}
where $K_\rho$ is the Hilbert-Schmidt kernel of $\rho$. The Wigner
function is a familiar quadratic joint representation of position
and momentum of a quantum mechanical state.

Formula (\ref{eqIntroduction0}) can be extended to the non
self-adjoint case: If $\rho$ is a finite rank operator
$\rho_{f,g}$ $(f,g \in L^2(\mathbb{R}^n))$ of the form:
\begin{equation}
\rho_{f,g} h = <h,g>_{L^2}  f, \label{eqIntroduction1} \,
\end{equation}
then the corresponding Wigner function is given by
\cite{Grochenig}:
\begin{equation}
\rho_{f,g}\overset{\mathrm{Weyl}%
}{\longleftrightarrow}W(f, g) (z) =  \int_{\mathbb{R}^n} f \left(x + \frac{y}{2} \right) \overline{g \left(x -
\frac{y}{2} \right)} e^{- 2 \pi i  \omega \cdot y} dy. \label{eqwigner1}
\end{equation}
The Wigner transform $(f,g) \to W(f,g)$ is well defined for all $f \in L^p(\RE^n)$ and $g \in L^{p^{\prime}}(\RE^n)$ with $1 \le p \le \infty$ and $\frac{1}{p}+\frac{1}{p^{\prime}}=1$. Moreover, it can be continuously extended to $f,g \in \S'(\RE^n)$ \cite{Dias2}, in which case $W(f,g) \in \mathcal{S}^{\prime} (\mathbb{R}^{2n})$.

The Wigner function contains the complete information about the quantum state (both in the pure and mixed state
cases). For an arbitrary density matrix $\rho$ and Weyl operator $A$ with Weyl symbol $a \in \mathcal{S}
(\mathbb{R}^{2n})$, we have the following identity:
\begin{equation}
\mbox{tr} \, (A \, \rho)= < a,W\rho >_{L^2 (\mathbb{R}^{2n})}, \label{eqIntroduction2.0}
\end{equation}
In the case $\rho=\rho_{f,g}$ and $f, g \in \mathcal{S}(\RE^n)$, we get:
\begin{equation}
<Af,g>_{L^2 (\mathbb{R}^n)}= <a,W(g,f)>_{L^2 (\mathbb{R}^{2n})}. \label{eqIntroduction2}
\end{equation}

One of the facts that makes the Weyl calculus very popular is that it enjoys the following symplectic covariance
property \cite{Folland,Birk,transam}. If $A: \mathcal{S} (\mathbb{R}^n) \to \mathcal{S}^{\prime} (\mathbb{R}^n)
$ is a Weyl operator with Weyl symbol $a \in \mathcal{S}^{\prime} (\mathbb{R}^{2n}) $, then
\begin{equation}
\widetilde{S}^{-1} A \widetilde{S} \overset{\mathrm{Weyl}%
}{\longleftrightarrow} a \circ S \label{eqIntroduction2.1}
\end{equation}
for any symplectic matrix $S \in Sp(n)$ and any of the two metaplectic operators $\pm \widetilde{S}$ that
project onto $S$. These operators extend to continuous mappings from $\mathcal{S}^{\prime} (\mathbb{R}^n)$ to $\mathcal{S}^{\prime} (\mathbb{R}^n)$.

It follows from (\ref{eqIntroduction2.1}) that if $W(f,g)(z)$ is a
Wigner function then $W(f,g)(Sz)$ is also a Wigner function for
arbitrary $f,g \in L^2 (\mathbb{R}^n)$ and $S \in Sp(n)$.
Moreover, $W(f,g)(Sz)= W(\widetilde{S}^{-1}f, \widetilde{S}^{-1}g)$. Conversely, some heuristic arguments
\cite{Habib} indicate that only the translations and the linear
symplectic and antisymplectic transformations preserve the set of
Wigner functions. In \cite{Dias1} we proved a precise result: if
$M \in Gl(2n, \RE)$ then $W(f,g)(Mz)$ is a Wigner function for all
$f,g \in L^2(\RE^n)$ if and only if $M$ is either a symplectic or
antisymplectic matrix. This result was extended in \cite{JPDOA} to
the case of non-linear coordinate transformations $\phi: \RE^{2n}
\to \RE^{2n}$ belonging to the group $Ham (n)$ of Hamiltonian
symplectomorphisms. It was proved that $W(f,g)(\phi(z)) $ is a
Wigner function for all $f,g \in L^2(\RE^{2n})$ if and only if
$\phi(z)=S z + a$ for some $S \in Sp(n)$ and $a \in \RE^{2n}$.
Notwithstanding the interest of these results, they are still
incomplete: Firstly, they do not apply to the important case of
mixed state Wigner functions. Secondly, and even for pure states,
we still do not know what are the most general coordinate
transformations that preserve the set of Wigner functions.

A generic linear operator mapping a quantum state (density matrix
or Wigner function, pure or mixed) to another state is called -
depending on the context - a quantum map or a positive
trace-preserving map. The characterization of these maps is a
central topic in areas of research like quantum information,
quantum computation, decoherence etc. Two famous results are the
Stinespring theorem and the Kraus theorem. They provide explicit
forms for all completely positive maps
\cite{Keyl,Kraus,Stinespring}. In the case of systems with
continuous variables, the maps that act by coordinate
transformations constitute a sub-class of quantum maps which are
easy to implement experimentally \cite{Braunstein}. They also play
a key role in the definition of separability/entanglement criteria
\cite{Simon} and of quantumness conditions for Gaussian states
\cite{Narcowich1}. Moreover, in the analysis of the semiclassical
limit of quantum mechanics, they are ubiquitous \cite{Littlejohn}.
Outside from quantum mechanics, nonlinear symplectic
transformations are believed to characterize aberration effects in
the wave and ray theory of light \cite{Dragt}. Also, a certain
nonlinear coordinate transformation was used to approximate the
propagation of the Wigner distribution of a pulse in a general
dispersive medium \cite{Loughlin}.

This paper is divided in two parts. In the first part we will
prove a uniqueness result about the covariance properties of the
Wigner transform. In the second part we will determine {\it all}
quantum maps that act by coordinate transformations on {\it all}
main sets of Wigner distributions. More precisely:

\vspace{0.3 cm} \noindent {\bf (I)} It is a well-documented fact that all sesquilinear maps from $L^2
(\mathbb{R}^n) \times L^2 (\mathbb{R}^n)$ to the set of measurable functions on $\mathbb{R}^{2n}$ which are
covariant under time-frequency translations belong to the so-called Cohen class \cite{Cohen}. In Theorem
\ref{Theorem1} we prove that if we also add the requirement of covariance under linear symplectomorphisms then
the Wigner transform is the {\it unique} solution. Hence, the set of Wigner distributions is the only set of
quasidistributions which is invariant under linear symplectic transformations.

This seems to be an expected result that could presumably be proven by imposing the symplectic covariance property
directly on the Cohen class of quasidistributions \cite{Cohen}. However, as we will see in section 4, this
approach does not pin down the Wigner transform uniquely in an obvious way, which might be the reason why, up to our
knowledge, this result has never been presented in the literature. In section 3, we will use a different
approach and prove the uniqueness result in an elegant way, directly from the properties of the metaplectic
group.

The question of identifying conditions that determine uniquely the
phase-space representative of a quantum mechanical state have been considered previously. In \cite{OconnellWigner} O'Connell and Wigner stated a certain number of conditions which determine the
Wigner function uniquely. Compared with our result, they impose
positive marginal distributions and Moyal's identity, whereas we
require symplectic invariance. Of course, the conditions required
depend on the context and the application that one has in mind. If
one is more interested in the probabilistic interpretation of
quantum mechanics or the energy content of a signal in signal
processing, then conditions such as proposed by O'Connell and
Wigner seem to be more appropriate. If one is more interested in
symmetry questions such as symplectic invariance which appear in
the semiclassical limit of quantum mechanics, in quantum
information theory or in quantum optics, then our conditions are
more natural.

\vspace{0.3 cm} \noindent {\bf (II)} In the second part of the
paper (section 5) we determine all coordinate transformations that
leave the sets of pure, mixed and distributional Wigner
distributions invariant. These results extend the results of
\cite{Dias1,JPDOA} in two different directions: i) The coordinate
transformations are not {\it a priori} restricted to a specific
set (in \cite{Dias1} they were assumed to be linear, and in
\cite{JPDOA} only Hamiltonian symplectomorphisms were considered)
and ii) the results are valid for all main sets of Wigner
distributions (and not only for pure states). Most significative
is Theorem \ref{TheoremPurity2}, where a complete result is proven
for the set of mixed states.

To state our results precisely, let us define the following sets
of Wigner distributions: Let $\W^2$ be the range of the transform
(\ref{eqwigner1}) for $f,g \in L^2(\RE^n)$. The subset of $\W^2$
which consists of the diagonal elements $W(f,f)$ with $||f||_{L^2 (\mathbb{R}^n)}=1$, is denoted by
$\W_+^2$. The elements of $\W_+^2$ are the true quantum mechanical
pure states, but nondiagonal elements of the form
(\ref{eqwigner1}) appear frequently in quantum mechanics, when one
considers linear combinations of wave functions. Let also
$\W^{\prime}$ be the range of (\ref{eqwigner1}) for $f,g \in
\S'(\RE^n)$. Finally, let $\W_M$ be the set of Weyl symbols of the
(pure and mixed) density matrices (i.e. positive, trace-class
operators $\rho: L^2(\RE^n) \to L^2(\RE^n)$ with unit trace). We
have, of course, $\W_+^2 \subset \W_M$ and $\W_+^2 \subset \W^2
\subset \W^{\prime}$.

Let us also consider the following sets of linear maps acting by
coordinate transformations:

\begin{definition} \label{DefinitionCT}

Let $\A=\W^2, \W_+^2, \W^{\prime}, \W_M$. Then ${\mathcal U}_\A$
is the set of all linear maps $U_{\phi}: \A \to \S'(\RE^{2n}) $ defined by:
\begin{equation}
\left(U_{\phi} F \right) (z) :
= J(z) F (\phi (z)) \label{eq2}
\end{equation}
where $\phi: \mathbb{R}^{2n} \to \mathbb{R}^{2n}$ is a $C^1$ diffeormorphism with Jacobian
\begin{equation}
J(z) = \left|\det \left( \frac{\partial \phi_i}{\partial z_j} \right)_{1 \le i,j \le 2n} \right| .
\label{eq2.1}
\end{equation}

\end{definition}

We remark that the Jacobian is included in (\ref{eq2}) for the sake of preserving the normalization:
\begin{equation}
\int_{\mathbb{R}^{2n}} \left(U_{\phi} F  \right) (z) dz =
\int_{\RE^{2n}} F (z) dz \quad , \quad \forall F \in L^1(\RE^{2n})
\label{eq3}
\end{equation}
Notice that the Jacobian is not required to be everywhere strictly positive definite.

We also remark that in the three cases $\A=\W^2,\W_+^2,\W_M$, the
requirement $U_{\phi} F \in \A$ immediately implies that $\phi \in
C^1$ (because then $U_{\phi} F$ has to be uniformly continuous
\cite{Grochenig}). This is also the case for $\A=\W^{\prime}$ (see
the proof of Corollary \ref{CorollaryCovariance}).

We now notice that, in general, it is not true that $U_{\phi}F \in
\A$ for all $F \in \A$. We will show in Theorem
\ref{TheoremCovariance}, Corollary \ref{CorollaryCovariance} and
Theorem \ref{TheoremPurity2} that, in all four cases
$\A=\W^2,\,\W_+^2,\, \W^{\prime}, \W_M$, the map $U_{\phi}$ is an
inner operation in $\A$ if and only if $\phi(z)=Mz + a$ with $a
\in \RE^{2n}$ and $M$ a symplectic or antisymplectic matrix. As a
byproduct of theses results, we argue in Remarks \ref{Remark1} and
\ref{Remark2} that the same conclusion is valid for maps of the
form $U_{\phi}: \W_+^2 \to \W^{\prime}$ and $U_{\phi}: \W_+^2 \to
\W_M$.

\vspace{0.3 cm} \noindent {\bf (III)} In the appendix we prove a
simple result about polynomials which is used in the proof of
Theorem \ref{TheoremCovariance}. We include this result in the
paper for completeness, because we were unable to find it in the
literature. It is a side result that, nevertheless, looks
interesting: it determines all real and continuous functions $f,g$
such that $f^2,g^2$ and $fg$ are all second order polynomials.

\vspace{0.3 cm}\noindent\textbf{Notation 1.} The inner product in
$\mathbb{R}^n$ is $u \cdot v = \sum_{i=1}^n u_i v_i$ for $u=(u_1,
\cdots, u_n),v=(v_1, \cdots, v_n) \in \mathbb{R}^n$ and $|u|^2 = u
\cdot u$. The space of continuous functions on $\mathbb{R}^n$ is
denoted by $C(\mathbb{R}^n)$. The Schwartz class of test functions
is written $\mathcal{S} (\mathbb{R}^n)$ and its dual
$\mathcal{S}^{\prime} (\mathbb{R}^n)$ is the space of tempered
distributions.

\vspace{0.3 cm}\noindent\textbf{Notation 2.} The standard
symplectic form on $\mathbb{R}^n \oplus \mathbb{R}^n$ is given by:
\begin{equation}
\left[z, z^{\prime} \right] = - z^T \mathcal{J} z^{\prime} = x \cdot \omega^{\prime}- x^{\prime} \cdot \omega ,
\label{eqIntroduction3}
\end{equation}
where $z=(x, \omega)$, $z^{\prime}=(x^{\prime}, \omega^{\prime})$ and
\begin{equation}
\mathcal{J} = \left(
\begin{array}{c c}
0 & I\\
-I & 0
\end{array}
\right) \label{eqIntrosuction4}
\end{equation}
is the standard symplectic matrix. We recall that $M$ is symplectic if $\left[M z, M z^{\prime} \right] =
\left[z, z^{\prime} \right]$, and anti-symplectic if $\left[M z, M z^{\prime}
 \right] = - \left[z, z^{\prime} \right]$, for all $z,z^{\prime} \in \mathbb{R}^{2n}$.

Anti-symplectic transformations amount to a symplectic transformation followed by a "time"-reversal:
\begin{equation}
z= \left(
\begin{array}{c}
x\\
\omega
\end{array}
\right) \mapsto Tz= \left(
\begin{array}{c}
x\\
- \omega
\end{array}
\right)
\label{eqTimereversal}
\end{equation}
This is interpreted as a time-reversal since it reverses a particle's momentum. We denote by $Sp(n)$ the symplectic group of real $2n \times 2n$
symplectic matrices and by $SpT(n) = Sp (n) \cup \left\{ T \right\}$ the group of all real $2n
\times 2n$ matrices which are either symplectic or antisymplectic.

Moreover, $Sym (n; \mathbb{R})$ is the set of real symmetric $n
\times n$ matrices. The set of symplectic matrices which are also
symmetric and positive-definite is denoted by $Sp^+ (n)$. Finally,
$\textsf{sp} (n)$ is the symplectic algebra.

\vspace{0.3 cm}\noindent\textbf{Notation 3.} The
Fourier-Plancherel transform of a function $f \in L^1
(\mathbb{R}^n) \cap L^2 (\mathbb{R}^n)$ is given by:
\begin{equation}
(\mathcal{F} f) (\omega) = \widehat{f} (\omega):= \int_{\mathbb{R}^n} f (x) e^{- 2 \pi i x \cdot \omega} d x,
\label{eqFourierTransform}
\end{equation}
which extends to an isometry in $L^2 (\mathbb{R}^n) $. From the point of view of quantum mechanics, it amounts to setting
the Planck constant $h=1$ or the more familiar $\hbar:=\frac{h}{2
\pi}= \frac{1}{2 \pi}$. In all subsequent formulae Planck's
constant can be recovered by a simple dilation $z \mapsto
\frac{z}{\sqrt{h}}$.

\section{Preliminaries}

In this section, we recapitulate some definitions and results which will be needed in the sequel.

The metaplectic group $Mp(n)$ is a unitary representation of the two-fold cover $Sp_2(n)$ of $Sp(n)$. We
denote by $\pi$ the projection from $Mp(n)$ onto $Sp(n) \cong
Mp(n) / \left\{ \pm I \right\}$. We then have $\pm \widetilde{S}
\mapsto \pi (\pm \widetilde{S} ) =S$.

The fundamental operators in Weyl quantization are the Heisenberg-Weyl operators defined by:
\begin{equation}
(\rho (z_0) f ) (x) = e^{2 \pi i \omega_0 \cdot \left(x-
\frac{x_0}{2} \right)} f (x-x_0) \label{eq3}
\end{equation}
for $f \in \mathcal{S} (\mathbb{R}^n)$, and $z_0 =(x_0, \omega_0)$. They extend to unitary operators in $L^2 (\mathbb{R}^n)$. An operator of the form
\begin{equation}
\rho(z, \tau) = e^{2 \pi i \tau} \rho (z), \label{eq5.1}
\end{equation}
for $(z, \tau) \in \mathbb{R}^{2n} \times \mathbb{R}$ is known as the Schr\"odinger representation of the
Heisenberg group $\mathbb{H} (n)$ (see e.g. \cite{Folland,Grochenig}).

\begin{theorem}\label{TheoremShaleWeil}
{\bf (Shale-Weil relation)} Let $S \in Sp(n)$ and $\pm \widetilde{S} \in Mp(n)$ the two metaplectic operators projecting onto $S$. Then
\begin{equation}
\widetilde{S} \rho (z) \widetilde{S}^{-1} = \rho (S z), \label{eqShaleWeil1}.
\end{equation}
\end{theorem}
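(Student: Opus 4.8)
The plan is to reduce the Shale--Weil relation to a short list of explicit computations, exploiting that $Sp(n)$ is generated by a few elementary families of symplectic matrices and that both sides of \eqref{eqShaleWeil1} behave multiplicatively. Recall that every $S \in Sp(n)$ is a finite product of matrices of the three types $V_P = \begin{pmatrix} I & 0 \\ -P & I \end{pmatrix}$ with $P \in Sym(n;\RE)$, $M_L = \begin{pmatrix} L^{-1} & 0 \\ 0 & L^T \end{pmatrix}$ with $L \in Gl(n,\RE)$, and the standard symplectic matrix $\mathcal{J}$. Since $Mp(n) \to Sp(n)$ is a connected two-fold cover, the metaplectic operators lying above these generators generate $Mp(n)$; moreover, up to an overall unimodular constant, they are known in closed form: $\widetilde{V_P}$ acts by multiplication with a quadratic chirp $e^{\pm i \pi P x \cdot x}$, $\widetilde{M_L}$ acts by $f(x) \mapsto |\det L|^{1/2} f(Lx)$, and $\widetilde{\mathcal{J}}$ is a phase multiple of the Fourier--Plancherel transform $\mathcal{F}$.

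First I would check $\widetilde{S}\,\rho(z)\,\widetilde{S}^{-1} = \rho(Sz)$ directly for each of the three generator types, substituting the closed forms above together with the definition \eqref{eq3} of $\rho(z)$. For a chirp $m_P$, the conjugation $m_P \rho(z_0) m_P^{-1}$ produces the extra factor $\exp\!\big(\pm i\pi[P x\cdot x - P(x-x_0)\cdot(x-x_0)]\big)$, whose exponent is affine in $x$ and therefore precisely shifts the momentum block of $z_0$, reproducing $V_P z_0$. For a dilation, the substitution $x \mapsto Lx$ together with $\omega_0 \cdot Lx = (L^T\omega_0)\cdot x$ and the cancellation of $|\det L|^{1/2}$ turns $\rho(z_0)$ into $\rho(M_L z_0)$. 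For $\mathcal{J}$, one uses that $\mathcal{F}$ intertwines translations with modulations, which swaps the position and momentum blocks of $z_0$ with exactly the sign prescribed by $\mathcal{J}$. In all three cases the overall phase of $\widetilde{S}$ drops out of the conjugation, so the two lifts $\pm\widetilde{S}$ yield the same operator and the statement is unambiguous.

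Second, I would extend the relation to all of $Sp(n)$ by induction on the length of a factorization into generators. If $S = S' S''$ and the identity holds for $S'$ and $S''$ with chosen lifts $\widetilde{S'}$ and $\widetilde{S''}$, then $\widetilde{S'}\widetilde{S''}$ is one of the two metaplectic operators above $S'S''$, and
\[
\widetilde{S'}\widetilde{S''}\,\rho(z)\,\widetilde{S''}^{-1}\widetilde{S'}^{-1} = \widetilde{S'}\,\rho(S''z)\,\widetilde{S'}^{-1} = \rho(S'S''z) = \rho(Sz).
\]
Since the result is independent of the sign of the lift, this closes the induction and yields \eqref{eqShaleWeil1} for every $S \in Sp(n)$.

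I expect the only real friction to be bookkeeping the normalizations and phases of the generators in the conventions of the paper (Planck's constant set to $h=1$), not any conceptual obstacle; the structural content sits entirely in the three generator computations. If one wishes to bypass these altogether, there is a softer argument: for fixed $S$, the map $z \mapsto \widetilde{S}^{-1}\rho(Sz)\widetilde{S}$ is again a unitary representation of the Heisenberg group $\mathbb{H}(n)$ with the same central character as $\rho$, because $S \in Sp(n)$ preserves the symplectic form $[\cdot,\cdot]$ and hence the Weyl commutation relations; by the Stone--von Neumann theorem it is unitarily equivalent to $\rho$, and transporting the equivalence through Schur's lemma yields \eqref{eqShaleWeil1} up to a phase --- precisely the phase ambiguity that the passage to $Mp(n)$ (a representation of the double cover $Sp_2(n)$) is designed to absorb.
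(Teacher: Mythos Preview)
Your proposal is correct and follows the standard route found in the references the paper cites; note, however, that the paper itself does not give a proof of this theorem at all---it simply states ``A proof of the previous theorem can be found in \cite{Folland,Birk,Shale,Weil}.'' Your generator-by-generator verification (chirps, dilations, Fourier transform) together with the multiplicativity argument is precisely the approach taken in Folland and in de~Gosson, and your closing remark about Stone--von~Neumann is the other classical route (closer in spirit to Shale and Weil), so there is nothing to contrast: you have supplied what the paper deliberately omitted.
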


A proof of the previous theorem can be found in \cite{Folland,Birk,Shale,Weil}.

The Wigner transform of the pair $(f,g) \in L^p (\mathbb{R}^n) \times L^{p^{\prime}} (\mathbb{R}^n) $ is defined by way of eq.(\ref{eqwigner1}). H\"older's inequality guarantees the continuity of $W(f,g)$. If $f=g$, then we simply write $W f$, meaning $W(f,f)$. In quantum mechanics $W f$ is interpreted as the quasi-probability density associated with the state $f \in L^2 (\mathbb{R}^n)$, whenever $||f||_2=1$. Moyal's identity states that \cite{Grochenig}
\begin{equation}
<W(f_1,f_2), W (g_1,g_2) >_{L^2 (\mathbb{R}^{2n})} = <f_1, g_1 >_{L^2 (\mathbb{R}^n)} <g_2, f_1 >_{L^2 (\mathbb{R}^n)}
\label{eqMoyalIdentity1}
\end{equation}
which entails, in particular that
\begin{equation}
<W f, Wg> _{L^2 (\mathbb{R}^{2n})} = \left|<f, g >_{L^2 (\mathbb{R}^n)} \right|^2 \ge 0 ,
\label{eqMoyalIdentity2}
\end{equation}
while
\begin{equation}
||W f||_{L^2 (\mathbb{R}^{2n})}^2 = ||f||_{L^2 (\mathbb{R}^n)}^4=1.
\label{eqMoyalIdentity3}
\end{equation}

Alternatively, we can regard the finite rank operator $\rho_f:= \rho_{f,f}$ (see (\ref{eqIntroduction1})) as a Weyl operator
\begin{equation}
(\rho_f h)(x) = \int_{\mathbb{R}^n} K_f(x,y) h(y) dy,
\label{eqDensityMatrix1}
\end{equation}
with kernel
\begin{equation}
K_f(x,y) = (f \otimes \overline{f}) (x,y) = f(x) \overline{f (y)}
\label{eqDensityMatrix2}
\end{equation}
and Weyl symbol given by the Wigner function $Wf$:
\begin{equation}
 \int_{\mathbb{R}^n} K_f \left(x + \frac{y}{2},x -
\frac{y}{2} \right) e^{- 2 \pi i  \omega \cdot y} dy= \int_{\mathbb{R}^n} f \left(x + \frac{y}{2} \right) \overline{f \left(x -
\frac{y}{2} \right)} e^{- 2 \pi i  \omega \cdot y} dy
\label{eqDensityMatrix2.1}
\end{equation}

In quantum mechanics, statistical mixtures appear naturally in most experimental setups, so that one generally ends up with convex combinations of Wigner functions of the form:
\begin{equation}
\sum_{\alpha} p_{\alpha} W f_{\alpha} (z),
\label{eqDensityMatrix3}
\end{equation}
where
\begin{equation}
0 \le p_{\alpha}, \hspace{1 cm} \sum_{\alpha} p_{\alpha}  =1
\label{eqDensityMatrix4}
\end{equation}
and the index $\alpha$ takes values in some subset of $\mathbb{N}$. The sum in (\ref{eqDensityMatrix3}) is the
Weyl symbol
\begin{equation}
 \int_{\mathbb{R}^n} K_{\rho} \left(x + \frac{y}{2},x -
\frac{y}{2} \right) e^{- 2 \pi i  \omega \cdot y} dy
\label{eqDensityMatrix5}
\end{equation}
of a Weyl operator acting on $L^2 (\mathbb{R}^n)$
\begin{equation}
(\rho h) (x) = \int_{\mathbb{R}^n} K_{\rho} (x,y) h(y) dy
\label{eqDensityMatrix5}
\end{equation}
with a Hilbert-Schmidt kernel $K_{\rho} =\sum_{\alpha} p_{\alpha} K_{f_{\alpha}} $. This operator is of the form
\begin{equation}
\rho = \sum_{\alpha} p_{\alpha} \rho_{f_{\alpha}}.
\label{eqDensityMatrix5.1}
\end{equation}
The previous series converges in the trace-norm. It can be shown that such operators are positive
and trace-class with unit trace. They are commonly known as
density matrices. Those (such as that in
eq.(\ref{eqDensityMatrix1})) which satisfy
\begin{equation}
\rho^2 = \rho
\label{eqDensityMatrix6}
\end{equation}
are said to represent pure states, otherwise the states are called mixed. If $W \rho$ denotes the Wigner function of a density matrix $\rho$, then (cf.(\ref{eqMoyalIdentity3}))
\begin{equation}
Tr \rho =1 \Leftrightarrow ||W \rho||_{L^2 (\mathbb{R}^{2n})} =1
\label{eqDensityMatrix6}
\end{equation}
if the state is pure, otherwise
\begin{equation}
Tr \rho < 1 \Leftrightarrow ||W \rho||_{L^2 (\mathbb{R}^{2n})} <1
\label{eqDensityMatrix6}
\end{equation}
for a mixed state. For this reason, one usually calls $||W \rho||_{L^2 (\mathbb{R}^{2n})}^2$ the purity of the state.

A difficult problem consists of determining whether a given
function $F(z)$ on the phase-space is the Wigner function
associated with some density matrix $\rho$. In other words, how
can one tell whether $F \in \W_M$? The answer is stated in the
following theorem (see e.g.\cite{Dias3,Dias4,Lions,Narcowich}):

\begin{theorem}\label{TheoremWignerFunctions}
Let $F: \mathbb{R}^{2n} \to \mathbb{C}$. Then $F \in \W_M$, if and
only if the following conditions hold true:

\vspace{0.3 cm} \noindent (i) $F \in L^2 (\mathbb{R}^{2n}) \cap C
(\mathbb{R}^{2n})$,

\vspace{0.3 cm}
\noindent
(ii) $\overline{F (z)} =F(z)$ everywhere in $\mathbb{R}^{2n}$,

\vspace{0.3 cm}
\noindent
(iii) $\int_{\mathbb{R}^{2n}} F(z) dz =1$, and

\vspace{0.3 cm}
\noindent
(iv) $\int_{\mathbb{R}^{2n}} F(z) W f (z) dz \ge 0$ for all $f \in L^2 (\mathbb{R}^n)$.
\end{theorem}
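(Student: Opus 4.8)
Since the statement is an equivalence, I would prove the two implications separately, using Moyal's identity (\ref{eqMoyalIdentity1}) --- equivalently, the fact that the Weyl correspondence is a unitary bijection of $L^2(\RE^{2n})$ onto the Hilbert--Schmidt operators on $L^2(\RE^n)$ --- as the common tool. For the ``only if'' direction, take $F=W\rho$ with $\rho$ a density matrix and read off the four conditions. Being trace-class, $\rho$ is in particular Hilbert--Schmidt, so $F\in L^2(\RE^{2n})$ by Moyal; using the spectral decomposition $\rho=\sum_\alpha p_\alpha\rho_{f_\alpha}$ with $\|f_\alpha\|_{L^2}=1$ and $\sum_\alpha p_\alpha=1$, one gets $F=\sum_\alpha p_\alpha Wf_\alpha$ with $\|Wf_\alpha\|_\infty\le\|f_\alpha\|_{L^2}^2=1$ by H\"older, so the series converges uniformly and $F\in C(\RE^{2n})$, which is (i). Positivity forces $\rho=\rho^*$, and a short computation with the Hilbert--Schmidt kernel shows that the Weyl symbol of $\rho^*$ is $\overline F$, giving (ii). For (iii) one uses $\int_{\RE^{2n}}W\rho(z)\,dz=\mathrm{Tr}\,\rho=1$, which is Fourier inversion applied to (\ref{eqIntroduction0}) (with the integral read appropriately; see the final paragraph). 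Finally, for any $f\in L^2(\RE^n)$, Moyal's identity for $\rho$ and $\rho_f$ together with $\mathrm{Tr}(\rho\,\rho_f)=<\rho f,f>_{L^2(\RE^n)}$ (as $\rho_f:h\mapsto <h,f>f$ is positive and rank one) yields $\int F\,Wf\,dz=<W\rho,W\rho_f>=<\rho f,f>\ge0$, which is (iv).

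For the ``if'' direction, suppose $F$ satisfies (i)--(iv). Since $F\in L^2(\RE^{2n})$, there is a \emph{unique} Hilbert--Schmidt operator $\rho$ with Weyl symbol $F$; the task is to show $\rho$ is a density matrix. Reality of $F$ (condition (ii)) gives $\rho=\rho^*$ by the same kernel computation, and condition (iv), again via Moyal, says $<\rho f,f>=\int F\,Wf\,dz$ --- real by (ii), since $Wf$ is real --- is $\ge0$ for all $f\in L^2(\RE^n)$, so $\rho\ge0$. Thus $\rho$ is a positive Hilbert--Schmidt operator, and everything comes down to showing it is trace-class with $\mathrm{Tr}\,\rho=1$; this is the step where (iii) must be used.

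The plan for that step is to introduce the coherent states $\phi_z$, the phase-space translates of the normalized Gaussian $\phi_0$, for which $W\phi_z(w)=W\phi_0(w-z)$ with $W\phi_0$ even, and to invoke the resolution of the identity $\int_{\RE^{2n}}\rho_{\phi_z}\,dz=I$ (itself a consequence of Moyal's identity for the Heisenberg--Weyl representation). Since $\rho\ge0$, trace and integral may be interchanged (Tonelli, all terms nonnegative), giving the value in $[0,+\infty]$
\[
\mathrm{Tr}\,\rho=\int_{\RE^{2n}}<\rho\,\phi_z,\phi_z>\,dz=\int_{\RE^{2n}}<F,W\phi_z>\,dz=\int_{\RE^{2n}}(F*W\phi_0)(z)\,dz ,
\]
where the integrand $F*W\phi_0$ is pointwise nonnegative precisely by (iv) (take $f=\phi_z$). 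Assuming $F\in L^1(\RE^{2n})$ --- implicit in (iii) --- Young's inequality and Fubini give $\int(F*W\phi_0)=\big(\int F\big)\big(\int W\phi_0\big)=1$, so $\rho$ is trace-class with unit trace, i.e.\ a density matrix with $W\rho=F$. (Continuity, (i), is then automatic; it is genuinely used only in the ``only if'' direction.)

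I expect the real difficulty to be concentrated in this last step: a positive Hilbert--Schmidt operator need not be trace-class, and $\int W\rho=\mathrm{Tr}\,\rho$ is available \emph{a priori} only once trace-class-ness is known, so there is no shortcut. Passing to the Husimi function $F*W\phi_0$, which \emph{is} everywhere nonnegative, is what allows the coherent-state resolution of the identity plus Fubini to do the work and makes (iii) bite. The one recurring technicality is the meaning of ``$\int F$'' in (iii): a density matrix can have a Wigner function that fails to lie in $L^1(\RE^{2n})$, so (iii) should either carry an explicit $L^1$ hypothesis or be interpreted as the pairing $<F,1>=\mathrm{Tr}\,\rho$ (equivalently $\int F*W\phi_0=1$); with any of these readings the argument above goes through. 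Complete details are in \cite{Dias3,Dias4,Lions,Narcowich}.
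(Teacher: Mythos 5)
The paper never proves this theorem: it is quoted as a known characterization with a pointer to \cite{Dias3,Dias4,Lions,Narcowich}, so there is no internal proof to compare against. Your sketch is essentially the standard argument from that literature and is sound in outline: Moyal's identity and the unitarity of the Weyl correspondence handle the ``only if'' direction and give self-adjointness and positivity of the Hilbert--Schmidt operator $\rho$ in the ``if'' direction, and the genuinely nontrivial step --- trace-class-ness --- is correctly isolated and handled via the Husimi function $F \ast W\phi_0 \ge 0$ (nonnegative by (iv) with $f=\phi_z$), the coherent-state resolution of the identity, and Tonelli, which is exactly how the cited references close the argument. Two small remarks. First, a normalization slip: with the paper's conventions $\|Wf_\alpha\|_{L^\infty} \le 2^n\|f_\alpha\|_{L^2}^2$ (cf. Lemma \ref{LemmaPurity1}), not $\le \|f_\alpha\|_{L^2}^2$; this does not affect the uniform-convergence argument for (i). Second, your caveat about condition (iii) is legitimate: for a general density matrix $W\rho$ need not lie in $L^1(\mathbb{R}^{2n})$, so the statement as printed is informal and the integral in (iii) must be read in a regularized sense (e.g.\ via the symplectic Fourier transform at the origin, as in Narcowich--O'Connell), while your $L^1$ reading is what makes the Fubini step $\int (F\ast W\phi_0) = \int F$ work in the ``if'' direction. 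Note that in the only place the paper actually uses the ``if'' direction (the proof of Theorem \ref{TheoremPurity2}) the function $F$ constructed there is nonnegative and integrable by construction, so the $L^1$ reading suffices for the paper's purposes, and your proof covers that case.
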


For future reference, we state the following Lemma.

\begin{lemma}\label{LemmaPurity1}
Let $F \in \W_M \cup \W_+^2$. Then
\begin{equation}
||W F||_{L^{\infty} (\mathbb{R}^{2n})} \le 2^n.
\label{eqPurity1}
\end{equation}
Moreover, for any $z_0 \in \mathbb{R}^{2n}$, there exists $f \in L^2 (\mathbb{R}^{n})$ such that
\begin{equation}
|W f (z_0)|=2^n.
\label{eqPurity2}
\end{equation}
\end{lemma}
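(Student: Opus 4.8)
The plan is to reduce everything to the elementary pointwise bound for pure states and then bootstrap. First I would prove the estimate for $F\in\W_+^2$, i.e.\ $F=Wf$ with $\|f\|_{L^2(\RE^n)}=1$: starting from (\ref{eqwigner1}), using $|e^{-2\pi i\,\omega\cdot y}|=1$ and the substitution $y\mapsto u=x+\tfrac{y}{2}$ at fixed $x$ (Jacobian $2^n$), one gets
\[
|Wf(x,\omega)|\ \le\ \int_{\RE^n}\Big|f\big(x+\tfrac{y}{2}\big)\Big|\,\Big|f\big(x-\tfrac{y}{2}\big)\Big|\,dy\ =\ 2^n\!\int_{\RE^n}|f(u)|\,|f(2x-u)|\,du\ \le\ 2^n\|f\|_{L^2}^2\ =\ 2^n,
\]
the last step being Cauchy--Schwarz. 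Hence $\|F\|_{L^\infty(\RE^{2n})}\le 2^n$ on $\W_+^2$; note in passing that equality in Cauchy--Schwarz is attainable, which is exactly what will make the second assertion work.

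Next I would treat $F\in\W_M$, say $F=W\rho$ with $\rho$ a density matrix. By the spectral theorem, $\rho=\sum_\alpha p_\alpha\rho_{e_\alpha}$ with $\{e_\alpha\}$ orthonormal, $p_\alpha\ge 0$ and $\sum_\alpha p_\alpha=\mbox{tr}\,\rho=1$, the series converging in trace norm, hence in Hilbert--Schmidt norm. Since the Weyl-symbol map is a Hilbert--Schmidt-to-$L^2$ isometry (the operator form of Moyal's identity (\ref{eqMoyalIdentity1})), the partial sums $\sum_{\alpha\le N}p_\alpha We_\alpha$ converge to $F$ in $L^2(\RE^{2n})$, hence a.e.\ along a subsequence. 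On the other hand, the pure-state bound gives $\|We_\alpha\|_{L^\infty}\le 2^n$ for every $\alpha$, so $\sum_\alpha p_\alpha We_\alpha$ converges absolutely and uniformly on $\RE^{2n}$ (Weierstrass $M$-test with $M_\alpha=2^n p_\alpha$, $\sum_\alpha M_\alpha=2^n$) to a continuous function bounded by $2^n$. The two limit functions must coincide, and thus $\|F\|_{L^\infty(\RE^{2n})}\le 2^n$. Since $\W_+^2\subset\W_M$, this establishes (\ref{eqPurity1}). (An alternative one-line route for this step is to write $W\rho(z)=2^n\,\mbox{tr}\big(\rho\,\widehat\Pi(z)\big)$ with $\widehat\Pi(z)$ the Grossmann--Royer reflection operator, which is self-adjoint and unitary, whence $|\mbox{tr}(\rho\,\widehat\Pi(z))|\le\|\rho\|_1\,\|\widehat\Pi(z)\|_{\mathrm{op}}=1$; I lean toward the spectral argument since it uses only notions already introduced.)

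For (\ref{eqPurity2}) I would exploit translational covariance of the Wigner transform. A direct computation from (\ref{eqwigner1}) and the definition of the Heisenberg--Weyl operators $\rho(z_0)$ given in Section 2 yields $W\big(\rho(z_0)f\big)(z)=Wf(z-z_0)$, and $\rho(z_0)$ is unitary on $L^2(\RE^n)$. Hence it suffices to produce a single unit-norm $f$ with $|Wf(0)|=2^n$ and then set $g:=\rho(z_0)f$, which has $\|g\|_{L^2}=1$ and $|Wg(z_0)|=|Wf(0)|=2^n$. Taking the normalized real Gaussian $f(x)=2^{n/4}e^{-\pi|x|^2}$ (for which $\|f\|_{L^2}=1$), one computes from (\ref{eqwigner1})
\[
Wf(0)\ =\ \int_{\RE^n}f\big(\tfrac{y}{2}\big)\,\overline{f\big(-\tfrac{y}{2}\big)}\,dy\ =\ 2^n\!\int_{\RE^n}|f(u)|^2\,du\ =\ 2^n,
\]
using that $f$ is real and even after the substitution $u=\tfrac y2$. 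The argument is routine throughout; the only point deserving a little care is the mixed-state step, where one must reconcile the $L^2$ and the uniform limits of the spectral series — and what legitimizes the identification is that both sides are continuous (Theorem \ref{TheoremWignerFunctions}(i)), so equality a.e.\ upgrades to equality everywhere. I do not anticipate any genuine obstacle.
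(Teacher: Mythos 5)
Your proposal is correct and takes essentially the same approach as the paper, which only sketches the argument: Cauchy--Schwarz for the pure-state bound and uniform convergence of the convex series $\sum_\alpha p_\alpha W f_\alpha$ for mixed states, with the attainment part delegated to the reference \cite{Royer}. You merely fill in the details the paper omits (identifying the $L^2$ and uniform limits of the spectral series, and realizing $|Wf(z_0)|=2^n$ via the normalized Gaussian together with translation covariance under the Heisenberg--Weyl operators), all of which is sound.
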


\begin{proof}
This is a well known result (see for instance \cite{Royer} for pure states) which
can be proven by a simple application of the Cauchy-Schwartz
inequality. For mixed states the proof follows by uniform convergence of the series (\ref{eqDensityMatrix3},\ref{eqDensityMatrix4}).
\end{proof}

Wigner measures cannot be regarded as joint probability measures
for position and momentum (or time and frequency) as they may not
be positive measures. Gaussian and pseudo-Gaussians are some of
the exceptions.

\begin{definition}\label{definitionHudson1}
The tempered distribution $f$ on $\mathbb{R}^n$ is called a pseudo-Gaussian if either (i) there is a proper subspace $V \subset \mathbb{R}^n$ and an element $y_0 \in V^{\perp}$ such that $f(x_1,x_2) = f_0 (x_1) \delta_{y_0} (x_2)$, where $x_1 \in V$, $x_2 \in V^{\perp}$, $f_0 (x_1) =C e^{- Q(x_1)}$ for some complex number $C$ and $Q$ is a polynomial of degree $2$ such that $Re Q$ is bounded from below, and $\delta_{y_0}$ is a Dirac measure, or (ii) $f(x) = f_0 (x) $, where $x \in \mathbb{R}^n$, $f_0 (x) =C e^{- Q(x)}$ for some complex number $C$ and $Q$ is a polynomial of degree $2$ such that $Re Q$ is bounded from below. In the latter case, $f$ is also called a semi-Gaussian. Notice that $f$ is a Gaussian if and only if it is a semi-Gaussian and $f(x) \to 0$ as $|x| \to \infty$.
\end{definition}

The following theorem in commonly known as Hudson's Theorem \cite{Grochenig,Hudson,Janssen,Toft}.

\begin{theorem}\label{HudsonTheorem}
{\bf (Hudson's Theorem)} Assume $f,g \in \mathcal{S}^{\prime} (\mathbb{R}^n) \backslash \left\{0 \right\} $. Then

\vspace{0.3 cm}
\noindent
1) $W(f,g)$ is a positive measure, if and only if $f$ is a pseudo-Gaussian and $g=c f$ for some constant $c>0$;

\vspace{0.3 cm}
\noindent
2) If, in addition, $f,g \in L^p (\mathbb{R}^n)$ for some $1 \le p < \infty$, then $W(f,g)$ is a positive measure, if and only if $f$ is a Gaussian and $g=cf$ for some constant $c>0$.
\end{theorem}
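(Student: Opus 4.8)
The plan is to establish both equivalences, the ``if'' directions being routine and the ``only if'' direction of part~1) carrying the weight; part~2) and the remaining implications then follow from it with little extra work.

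\emph{The ``if'' directions.} Suppose $f$ is a pseudo-Gaussian and $g=cf$ with $c>0$. Then $W(f,g)=\bar c\,W(f,f)=c\,Wf$, so it suffices to show that $Wf$ is a positive measure. Using the metaplectic covariance $Wf(Sz)=W(\widetilde S^{-1}f)(z)$ (a consequence of Theorem~\ref{TheoremShaleWeil}), the translation covariance of $W$, and the multiplicativity of $W$ over tensor products, one reduces to a few base cases — the standard Gaussian (whose Wigner distribution is a strictly positive Gaussian), the constant function (with $W\mathbf 1\propto\delta(\omega)$) and the Dirac mass (with $W\delta\propto\delta(x)$) — or, alternatively, one evaluates the Gaussian/Fourier integral in (\ref{eqwigner1}) directly. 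In each case $Wf$ comes out as a Gaussian density carried on an affine subspace, hence a positive measure. This also gives the ``if'' direction of 2), since there $f$ is a genuine Gaussian.

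\emph{The ``only if'' direction of 1): proportionality.} Let $\phi_0(x)=2^{n/4}e^{-\pi|x|^2}$, so that $W\phi_0(z)=2^ne^{-2\pi|z|^2}>0$ everywhere, and let $V_{\phi_0}h(x,\omega)=\int_{\RE^n}h(t)\,\overline{\phi_0(t-x)}\,e^{-2\pi i\omega\cdot t}\,dt$ be the short-time Fourier transform with window $\phi_0$. Combining Moyal's identity (\ref{eqMoyalIdentity1}) with the phase-space covariance $W(\rho(z)\phi_0)(u)=W\phi_0(u-z)$ gives the pointwise identity
\begin{equation}
\big(W(f,g)*W\phi_0\big)(z)=V_{\phi_0}f(z)\,\overline{V_{\phi_0}g(z)}\qquad(z\in\RE^{2n}),
\label{eqplanSpec}
\end{equation}
which extends to $f,g\in\S'(\RE^n)$ by continuity. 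If $W(f,g)\ge 0$, then, $W\phi_0$ being a strictly positive Schwartz function, the left side of (\ref{eqplanSpec}) is $\ge 0$, hence so is $V_{\phi_0}f(z)\,\overline{V_{\phi_0}g(z)}$ for every $z$. Now $V_{\phi_0}h(x,\omega)=\Phi(x,\omega)\,(\mathcal Bh)(x-i\omega)$, where $\Phi$ is a fixed nowhere-vanishing function with $|\Phi(x,\omega)|=e^{-\frac\pi2(|x|^2+|\omega|^2)}$, independent of $h$, and $\mathcal Bh$ (the Bargmann transform of $h$) is entire on $\CO^n$. With $F:=\mathcal Bf$, $G:=\mathcal Bg$ — entire, not identically zero — the inequality becomes $F(w)\,\overline{G(w)}\ge 0$ for all $w\in\CO^n$. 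On the connected, dense open set $\{G\ne 0\}$ the holomorphic function $F/G=F\overline G/|G|^2$ takes values in $[0,\infty)$, hence is a constant $\lambda\ge 0$ by the open mapping theorem; since $F\not\equiv 0$, $\lambda>0$, and $F=\lambda G$ on all of $\CO^n$, i.e.\ $f=\lambda g$. Thus $g=cf$ with $c=\lambda^{-1}>0$, and $Wf=c^{-1}W(f,g)$ is a positive measure.

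\emph{The ``only if'' direction of 1): identification of $f$, and part 2).} Since $f\ne 0$ we have $V_{\phi_0}f\not\equiv 0$, so $Wf*W\phi_0=|V_{\phi_0}f|^2\not\equiv 0$; as $Wf$ is then a nonzero positive measure and $W\phi_0$ is strictly positive, $(Wf*W\phi_0)(z)>0$ for \emph{every} $z$, whence $F=\mathcal Bf$ is zero-free on $\CO^n$. As $\CO^n$ is simply connected, $F=e^{P}$ for some entire $P$. Because $f\in\S'(\RE^n)$, $V_{\phi_0}f$ has polynomial growth, so $|F(w)|\le C(1+|w|)^Ne^{\frac\pi2|w|^2}$ and $\operatorname{Re}P(w)\le\frac\pi2|w|^2+N\log(1+|w|)+\log C$; restricting $P$ to complex lines through the origin and applying the Borel--Carath\'eodory inequality forces $P$ to be a polynomial of degree at most $2$. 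Feeding $F=e^{P}$ back through $\mathcal B^{-1}$ and evaluating the resulting (possibly degenerate) Gaussian integral identifies $f$ as a pseudo-Gaussian in the sense of Definition~\ref{definitionHudson1}, the requirement that $\operatorname{Re}Q$ be bounded from below corresponding precisely to the growth bound on $P$; this proves 1). Finally, if in addition $f,g\in L^p(\RE^n)$ with $1\le p<\infty$, then a pseudo-Gaussian carrying a Dirac factor is not a function, and a semi-Gaussian $Ce^{-Q}$ with $\operatorname{Re}Q$ bounded below but not coercive does not vanish at infinity, so it fails to be $p$-integrable; hence $\operatorname{Re}Q$ is positive-definite, $f$ (and $g=cf$) is a genuine Gaussian, and with the Gaussian base case this yields 2).

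The conceptual core of the argument — the zero-freeness of $\mathcal Bf$ — is immediate once one observes that $W\phi_0$ is strictly positive, and the proportionality step is an equally short complex-analytic observation. I expect the main technical obstacle to be the bookkeeping in the identification step (and in the ``if'' direction): matching $\mathcal B^{-1}$ of the exponentials of degree-$\le 2$ polynomials, constrained by temperedness, with the exact list of pseudo-Gaussians, including their degenerate, Dirac-carrying cases, and verifying directly that each such $f$ does have a positive Wigner distribution.
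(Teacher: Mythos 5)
The paper does not give its own proof of Theorem \ref{HudsonTheorem}; its ``proof'' is the citation to \cite{Toft}. Your proposal essentially reconstructs the standard Janssen--Toft argument behind that citation, and its load-bearing steps are sound: the identity $\bigl(W(f,g)*W\phi_0\bigr)(z)=V_{\phi_0}f(z)\,\overline{V_{\phi_0}g(z)}$ does follow from Moyal's identity together with the covariance $W(\rho(z)\phi_0)(u)=W\phi_0(u-z)$ (the phase factors relating $\langle f,\rho(z)\phi_0\rangle$ to $V_{\phi_0}f(z)$ cancel in the product) and extends to $\mathcal S'$; positivity of $W(f,g)$ then forces $\mathcal Bf\,\overline{\mathcal Bg}\ge 0$ on $\CO^n$, and your connectedness/open-mapping argument for proportionality, the strict positivity of $Wf*W\phi_0$ giving zero-freeness of $\mathcal Bf$, and the Borel--Carath\'eodory argument on complex lines yielding $\mathcal Bf=e^{P}$ with $P$ a polynomial of degree at most $2$ are all correct. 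The two places you defer as ``bookkeeping'' are exactly where the substance of the cited reference lies: (i) showing that the inverse Bargmann transform of a zero-free $e^{P}$, $\deg P\le 2$, subject to the tempered growth bound $|e^{P(w)}|\le C(1+|w|)^{N}e^{\frac{\pi}{2}|w|^{2}}$, is precisely a pseudo-Gaussian in the sense of Definition \ref{definitionHudson1}, including the degenerate Dirac-carrying cases, and (ii) the converse verification that every pseudo-Gaussian has nonnegative Wigner distribution, which requires carrying out the reduction by metaplectic operators and phase-space translations to the normal forms (Gaussian)$\,\otimes\,$(constant)$\,\otimes\,$(Dirac), or the explicit degenerate Gaussian/Fresnel integrals. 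Neither step is in doubt---both are standard, if fiddly---so I see no genuine gap; compared with the paper's bare citation, your route has the merit of making the mechanism explicit: strict positivity of $W\phi_0$ forces the Bargmann transform to be zero-free, and temperedness then pins it down as the exponential of a quadratic.
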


\begin{proof}
For a proof see e.g. \cite{Toft}.
\end{proof}

\section{The uniqueness of the Wigner transform}

There are a number of properties that seem natural to require of a phase space representation of (non-diagonal)
density matrix elements of the form $\rho_{f,g}$ (see (\ref{eqIntroduction1})). These correspond to Weyl
operators with kernel $K_{\rho_{f,g}}= f \otimes \overline{g}$. Since this quantity is sesquilinear, so should
be its phase space counterpart $\mathcal{Q} (f,g)$. If the wave functions $f$ and $g$ undergo a translation by
an amount $x_0$ and their Fourier transforms $\widehat{f}$, $\widehat{g}$ increase the momentum by an amount
$\omega_0$, then $\mathcal{Q} (f,g)(z)$ should be translated to $\mathcal{Q} (f,g) (z-z_0)$ in the phase space,
where $z_0=(x_0,\omega_0)$. Equally, if a metaplectic transformation $\widetilde{S}$ acts on $f$ and $g$,
then the effect on the phase space representation ought to be $\mathcal{Q} (f,g)(S^{-1} z)$. Moreover, the
diagonal elements $\rho_f \def \rho_{f,f}$ must have unit trace. It then seems natural to require that the
integral of $\mathcal{Q} (f,f) (z)$ be finite. These conditions are stated explicitly in the following theorem,
which shows that the Wigner transform is the only representative of $\rho_{f,g}$ which satisfies this set of
conditions.

\begin{theorem}\label{Theorem1}
Let $\mathcal{Q}$ be a mapping $L^2 (\mathbb{R}^n) \times L^2 (\mathbb{R}^n) \to L^{\infty} (\mathbb{R}^{2n})
$. Then $\mathcal{Q}$ is proportional to the Wigner transform, if and only if:

\vspace{0.3 cm} \noindent {\bf 1)} $\mathcal{Q}(\alpha_1 f_1+ \alpha_2 f_2, g) = \alpha_1 \mathcal{Q} (f_1, g) +
\alpha_2 \mathcal{Q}(f_2, g)$;

\vspace{0.3 cm} \noindent {\bf 2)} $\mathcal{Q}(f, \beta_1 g_1+ \beta_2 g_2) = \overline{\beta_1} \mathcal{Q}
(f, g_1) + \overline{\beta_2} \mathcal{Q}(f, g_2)$;

\vspace{0.3 cm} \noindent {\bf 3)} There exists $C>0$ such that $|\mathcal{Q}(f, g) (0)| \le C || f||_2 ||
g||_2$;

\vspace{0.3 cm} \noindent {\bf 4)} Given $z_0 \in \mathbb{R}^{2n}$, then $\mathcal{Q}(f, g) (z -z_0) =
\mathcal{Q} (\rho (z_0) f, \rho (z_0) g) (z)$ for all  $z \in \mathbb{R}^{2n}$, where $\rho (z_0)$ is the
Heisenberg-Weyl operator;

\vspace{0.3 cm} \noindent {\bf 5)} There exists $f_0 \in L^2 (\mathbb{R}^n)$ such that $\int_{\mathbb{R}^{2n}}
\mathcal{Q} (f_0, f_0) (z) dz$ is finite;

\vspace{0.3 cm} \noindent {\bf 6)} For any $\widetilde{S} \in Mp (n)$, $\mathcal{Q}(\widetilde{S} f,
\widetilde{S}g) (0) = \mathcal{Q}( f , g) (0)$,

\vspace{0.3 cm} \noindent for arbitrary $\alpha_1, \alpha_2, \beta_1, \beta_2 \in \mathbb{C}$ and $f_1, f_2, f,
g_1, g_2, g \in L^2 (\mathbb{R}^n)$.

\end{theorem}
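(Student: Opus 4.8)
The plan is to show that conditions 1)--6) force $\mathcal{Q}$ to coincide, up to a constant, with the Wigner transform $W$. The overall strategy is to reduce everything to the value at the phase-space origin $z=0$: by the translation covariance 4), knowing $\mathcal{Q}(f,g)(0)$ for all $f,g$ determines $\mathcal{Q}(f,g)(z)$ for all $z$, since $\mathcal{Q}(f,g)(z) = \mathcal{Q}(\rho(-z)f,\rho(-z)g)(0)$. Hence it suffices to identify the sesquilinear functional $B(f,g) := \mathcal{Q}(f,g)(0)$ on $L^2(\RE^n)\times L^2(\RE^n)$, knowing that it is bounded (condition 3)) and metaplectically invariant (condition 6)), and to compare it with $B_W(f,g) := W(f,g)(0)$.

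First I would record that $B$ is a bounded sesquilinear form on $L^2(\RE^n)$, hence by the Riesz representation theorem there is a bounded operator $\mathcal{A}$ on $L^2(\RE^n)$ with $B(f,g) = \langle \mathcal{A} f, g\rangle_{L^2}$. The metaplectic invariance 6), $B(\widetilde{S}f,\widetilde{S}g) = B(f,g)$, translates into $\widetilde{S}^{\,*}\mathcal{A}\widetilde{S} = \mathcal{A}$, i.e. $\mathcal{A}$ commutes with every $\widetilde{S}\in Mp(n)$ (using unitarity of metaplectic operators). Now I would invoke the key representation-theoretic fact: the metaplectic representation of $Mp(n)$ on $L^2(\RE^n)$ splits into exactly two irreducible components, the even and the odd functions (the two parity eigenspaces), because $-I\in Sp(n)$ lifts to the parity operator up to phase and the even/odd subspaces are each irreducible under $Mp(n)$. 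By Schur's lemma, an operator commuting with the whole representation must act as a scalar $\lambda_+$ on the even subspace and $\lambda_-$ on the odd subspace; so $\mathcal{A} = \lambda_+ P_+ + \lambda_- P_-$ where $P_\pm$ are the projections onto even/odd parts.

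Next I would pin down the two scalars using the remaining conditions. The same reasoning applies to the Wigner functional: $B_W(f,g) = W(f,g)(0) = \int f(y/2)\overline{g(-y/2)}\,dy = 2^n\langle f, \check{g}\rangle$ up to a change of variables, where $\check{g}(x)=g(-x)$; a direct computation shows $B_W(f,g) = 2^n\langle \Pi f, g\rangle$ with $\Pi$ the parity operator, so in the even/odd decomposition $B_W$ corresponds to the operator $2^n(P_+ - P_-)$. Thus I must show that the general $\mathcal{A}$ is a scalar multiple of $P_+ - P_-$, i.e. that $\lambda_+ = -\lambda_-$. This is where conditions 4) and 5) enter. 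Using 4) together with the explicit action of the Heisenberg-Weyl operators, one computes $\mathcal{Q}(f_0,f_0)(z) = B(\rho(-z)f_0,\rho(-z)f_0) = \langle \mathcal{A}\rho(-z)f_0,\rho(-z)f_0\rangle = \lambda_+\|P_+\rho(-z)f_0\|^2 + \lambda_-\|P_-\rho(-z)f_0\|^2$. Condition 5) requires this to be integrable in $z$ for some $f_0$; integrating the two squared-norm terms over $z\in\RE^{2n}$ each gives (by a standard Heisenberg-Weyl/orthogonality identity, essentially Moyal's formula for $W f_0$ which has integral $\|f_0\|^2\neq 0$ generically) a divergent contribution $\|f_0\|_2^2\cdot\|P_\pm f_0\|_2^2\cdot(\text{vol})$ unless there is a cancellation, which happens precisely when $\lambda_+ + \lambda_- = 0$. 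Hence $\mathcal{A} = \lambda(P_+ - P_-) = \tfrac{\lambda}{2^n} \cdot 2^n\Pi$, giving $\mathcal{Q}(f,g)(0) = c\,W(f,g)(0)$ and therefore $\mathcal{Q} = cW$ by the translation argument. The converse direction, that $W$ itself satisfies 1)--6), is routine: sesquilinearity is obvious, boundedness at $0$ is the Cauchy-Schwarz bound from Lemma \ref{LemmaPurity1}, translation covariance 4) is the standard Heisenberg-Weyl property of $W$, finiteness 5) holds for any Schwartz $f_0$, and 6) is the symplectic covariance (\ref{eqIntroduction2.1}) evaluated at the fixed point $z=0$ of $S$.

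I expect the main obstacle to be the precise justification of the $\lambda_+ + \lambda_- = 0$ step from condition 5): one has to be careful that $\mathcal{Q}(f_0,f_0)$, a priori only in $L^\infty(\RE^{2n})$, is being integrated, so the argument should be phrased as: the two nonnegative functions $z\mapsto\|P_\pm\rho(-z)f_0\|^2$ have prescribed (finite, equal to $\|P_\pm f_0\|^2$) behavior under averaging but are not themselves integrable, and their combination $\lambda_+\|P_+\rho(-z)f_0\|^2+\lambda_-\|P_-\rho(-z)f_0\|^2$ is integrable only if the non-integrable parts cancel — which I would make rigorous by testing against the constant function or by a direct computation using $\|P_\pm h\|^2 = \tfrac12(\|h\|^2 \pm \langle \Pi h,h\rangle)$, so that the integrand becomes $\tfrac{\lambda_++\lambda_-}{2}\|f_0\|^2 + \tfrac{\lambda_+-\lambda_-}{2}\langle\Pi\rho(-z)f_0,\rho(-z)f_0\rangle$; the second term is $\tfrac{\lambda_+-\lambda_-}{2}W f_0(z)\cdot(\text{const})$ which is integrable, while the first term is a nonzero constant unless $\lambda_++\lambda_-=0$. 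A secondary point needing care is the claim that the even and odd subspaces are irreducible under $Mp(n)$ and that $\mathcal{A}$ being bounded (not assumed self-adjoint or compact) still yields scalars on each component — this is fine since Schur's lemma for a unitary irreducible representation applies to any bounded intertwiner.
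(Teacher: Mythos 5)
Your proposal follows essentially the same route as the paper's proof: Riesz representation of the bounded sesquilinear form $\mathcal{Q}(\cdot,\cdot)(0)$, translation covariance 4) to recover $\mathcal{Q}(f,g)(z)=\langle \rho(z)U(0)\rho(z)^{-1}f,g\rangle_{L^2}$, Schur's lemma on the even/odd decomposition of the metaplectic representation to get $U(0)=\lambda_+P_++\lambda_-P_-$, and condition 5) to eliminate the multiple of the identity, leaving the parity (Grossmann--Royer) part with $W(f,g)(z)=2^n\langle G(z)f,g\rangle_{L^2}$. The only caveat is your passing claim that the term proportional to $Wf_0(z)$ is integrable (for a general $f_0\in L^2$ it is only guaranteed to lie in $L^2\cap L^\infty$), but your cancellation argument still closes, since a nonzero constant plus an $L^2$ function cannot have a finite integral over $\mathbb{R}^{2n}$ --- a step the paper itself leaves implicit in its one-line deduction of $\beta=-\alpha$.
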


\begin{proof}
That the Wigner transform satisfies all the previous conditions is a well known fact.

Conversely, suppose that the mapping $\mathcal{Q}$ satisfies all the above requirements. From {\bf 1)}, {\bf 2)}
and {\bf 3)}, we conclude that the mapping $(f, g) \mapsto \mathcal{Q}(f, g) (0)$ is a bounded sesquilinear
form. By the Riesz representation theorem, there exists a bounded linear operator $U (0)$ such that
\begin{equation}
\mathcal{Q}(f, g) (0) = < U (0) f, g>_{L^2}. \label{equniqueness1}
\end{equation}
From the previous equation and {\bf 4)}, we obtain:
\begin{equation}
\begin{array}{c}
\mathcal{Q} (f,g) (z) = \mathcal{Q} (\rho (-z) f, \rho (-z) g) (0)=\\
\\
=<U(0) \rho (-z) f, \rho (-z) g >_{L^2}= <\rho (z) U(0) \rho (z)^{-1} f,  g >_{L^2},
\end{array}
\label{equniqueness2}
\end{equation}
for all $f,g  \in L^2 (\mathbb{R}^{n})$, and where we used the fact that $\rho(z)^{\ast} = \rho(z)^{-1} =
\rho(-z)$.

Next we define:
\begin{equation}
U(z)  = \rho(z) U (0) \rho(z)^{-1}. \label{eq11}
\end{equation}

Notice that from (\ref{equniqueness1}) and {\bf 6)} we conclude that $U (0)$ commutes with $\widetilde{S} $ for all
$\widetilde{S} \in Mp (n)$:
\begin{equation}
\widetilde{S} U(0) \widetilde{S}^{-1} = U (0). \label{eq12}
\end{equation}
Consequently, we have the desired symplectic covariance property:
\begin{equation}
\widetilde{S} U (z) \widetilde{S}^{-1} = U (S z) \Rightarrow \mathcal{Q} (\widetilde{S}^{-1} f,
\widetilde{S}^{-1} g ) (z) = \mathcal{Q}(f, g) (Sz), \label{eq12}
\end{equation}
where we used the Shale-Weil relation. Finally, we recall that the metaplectic representation is not irreducible.
Indeed, it has two non-trivial invariant subspaces, the subspaces of even and odd functions \cite{Folland}.
Since $U (0)$ commutes with $\widetilde{S}$ for all $\widetilde{S}  \in Mp (n)$, in these subspaces it must
be proportional to the identity operator according to Schur's Lemma.

Let $I$ denote the identity operator and $R$ the reflection operator $(R f ) (x) = f (-x)$ in $L^2
(\mathbb{R}^n)$. The projections on the subspaces of even $(+)$ and odd $(-)$ functions are $P_{\pm} =
\frac{1}{2} (I \pm R)$. From the previous analysis we conclude that:
\begin{equation}
U (0) P_+ = \alpha P_+, \hspace{1 cm} U(0) P_- = \beta P_-, \label{eq13}
\end{equation}
for some constants $\alpha, \beta \in \mathbb{C}$. The solution to both equations is:
\begin{equation}
U(0)  = \alpha  P_+ + \beta  P_- = \frac{\alpha + \beta}{2} I + \frac{\alpha - \beta}{2} R  . \label{equniqueness7}
\end{equation}
Altogether, from (\ref{equniqueness2},\ref{equniqueness7}) we obtain:
\begin{equation}
\mathcal{Q} (f, g) (z) = \frac{\alpha + \beta}{2} <f,g>_{L^2} + \frac{\alpha - \beta}{2} <G (z) f,g>_{L^2},
\label{eq15}
\end{equation}
where
\begin{equation}
G (z)  = \rho (z)   R \rho(z)^{-1} \label{eq16}
\end{equation}
is the Grossmann-Royer operator \cite{Birk,Grossmann,Royer}:
\begin{equation}
(G(z_0)f) (x) = e^{4 \pi i \omega_0 \cdot(x-x_0)} f (2x_0-x), \label{eq16.1}
\end{equation}
for $z_0=(x_0, \omega_0)$.

Finally, from {\bf 5)} we conclude that $\beta=  - \alpha$ and thus
\begin{equation}
\mathcal{Q}(f,g) (z) = \alpha <G (z) f,g>_{L^2}. \label{eq17}
\end{equation}
On the other hand, it is well known \cite{Birk,Grossmann,Royer} that the cross-Wigner function can be written as
\begin{equation}
W(f, g) (z) = 2^n <G (z) f,g>_{L^2}, \label{eq18}
\end{equation}
and this concludes the proof.
\end{proof}

\section{Remarks on Cohen's class}

Notice that the condition {\bf 6)} which leads to the symplectic covariance is crucial for uniqueness in the
previous theorem. Indeed, there are many quadratic representations which satisfy all the previous conditions
except {\bf 6)}.

It is a well known fact \cite{Grochenig} that if a mapping
$\mathcal{Q}: \mathcal{S} (\mathbb{R}^n) \times \mathcal{S}
(\mathbb{R}^n) \to \mathcal{S} (\mathbb{R}^{2n})$ satisfies
conditions {\bf 1)-4)} (and {\bf 5)} by construction), then there
exists $\sigma \in \mathcal{S}^{\prime} (\mathbb{R}^{2n})$, such
that
\begin{equation}
\mathcal{Q}(f,g) (z)= \mathcal{Q}_{\sigma } (f, g) (z) = (\sigma \star W (f, g) ) (z) = \int_{\mathbb{R}^{2n}}
\sigma (z- z^{\prime}) W(f, g) (z^{\prime}) d z^{\prime}. \label{eq19}
\end{equation}
A quadratic representation of this form is said to belong to
Cohen's class \cite{Cohen}. It is clear, however, that not all
maps (\ref{eq19}) (with $\sigma \in
\mathcal{S}'(\mathbb{R}^{2n})$) satisfy the conditions {\bf 1)-5)}
(take, for instance, $\sigma=1$). However, for large classes of
distributions $\sigma$ the maps (\ref{eq19}) do satisfy {\bf
1)-5)}. Let us illustrate this point by taking $\sigma \in L^1
(\mathbb{R}^{2n}) \cap C (\mathbb{R}^{2n})$.

That $\mathcal{Q}_{\sigma } (f, g) $ is sesquilinear in $f$ and $g$ is obvious.

Next:
\begin{equation}
\begin{array}{c}
|\mathcal{Q}_{\sigma } (f, g)  (0)| \le \int_{\mathbb{R}^{2n}} |\sigma (- z^{\prime}) | ~ |W(f, g) (z^{\prime})| d z^{\prime} \le\\
\\
\le C ||f||_2 ||g||_2 ||\sigma ||_1.
\end{array}
\label{eq20}
\end{equation}

Moreover, if $f  \in \mathcal{S} (\mathbb{R}^n)$, then from Fubini's Theorem:
\begin{equation}
|\int_{\mathbb{R}^{2n}} \mathcal{Q}_{\sigma} (f,f) (z) dz | \le ||\sigma||_1 ~|| W f||_1, \label{eq21}
\end{equation}
which means that $\int_{\mathbb{R}^{2n}} \mathcal{Q}_{\sigma} (f,f) (z) dz $ is finite for all $f  \in
\mathcal{S} (\mathbb{R}^n)$. This proves {\bf 5)}.

Also:
\begin{equation}
\begin{array}{l}
 \mathcal{Q}_{\sigma} (f,g) (z -z_0 ) = \int_{\mathbb{R}^{2n}} \sigma (z-z_0- z^{\prime}) W(f, g) (z^{\prime}) d z^{\prime} =\\
 \\
 = \int_{\mathbb{R}^{2n}} \sigma (z- z^{\prime \prime}) W(f, g) (z^{\prime \prime} -z_0) d z^{\prime \prime} =\\
 \\
 = \int_{\mathbb{R}^{2n}} \sigma (z- z^{\prime \prime}) W(\rho (z_0)f, \rho (z_0) g) (z^{\prime \prime} ) d z^{\prime \prime} =\\
 \\
 = \mathcal{Q}_{\sigma} (\rho (z_0)f,\rho (z_0)g) (z).
\end{array}
\label{eq22}
\end{equation}
So, we conclude that if $\sigma \in  L^1 (\mathbb{R}^{2n}) \cap C
(\mathbb{R}^{2n})$, then $\mathcal{Q}_{\sigma} $ automatically
satisfies all the conditions {\bf 1)-5)} of Theorem
\ref{Theorem1}.

However, none of the maps (\ref{eq19}) with $\sigma \in L^1
(\mathbb{R}^{2n}) \cap C (\mathbb{R}^{2n})$ satisfies {\bf 6)}. To
prove this let $\widetilde{S} \in Mp(n)$ and consider the
relation:
\begin{equation}
\begin{array}{l}
\mathcal{Q}_{\sigma} (\widetilde{S}f,\widetilde{S}g) (0) = \int_{\mathbb{R}^{2n}} \sigma (- z^{\prime}) W(\widetilde{S}f,\widetilde{S}g) (z^{\prime}) d z^{\prime} =\\
\\
=\int_{\mathbb{R}^{2n}} \sigma (- z^{\prime}) W(f, g) (S^{-1} z^{\prime}) d z^{\prime} =\\
\\
=\int_{\mathbb{R}^{2n}} \sigma (- S z^{\prime \prime}) W(f, g) ( z^{\prime \prime}) d z^{\prime \prime} =
\mathcal{Q}_{\sigma \circ (-s)} (f,g) (0)
\end{array}
\label{eq23}
\end{equation}
where $\widetilde{S} $ projects on $S \in Sp(n)$ and $s$ is the symplectic automorphism $s(z) =Sz$. If we impose
{\bf 6)}, then
\begin{equation}
\mathcal{Q}_{\sigma \circ (-s)} (f,g) (0) = \mathcal{Q}_{\sigma } (f,g) (0) \label{eq24}
\end{equation}
for all linear symplectomorphisms $s$ and all $f, g \in \mathcal{S} (\mathbb{R}^n)$. This is equivalent to:
\begin{equation}
\sigma (Sz) = \sigma(z) \label{eq25}
\end{equation}
for all $z \in \mathbb{R}^{2n}$ and all $S \in Sp(n)$. Since $\sigma$ is continuous and $Sp(n)$ acts
transitively on $\mathbb{R}^{2n} \backslash \left\{ 0  \right\}$ \cite{Moskowitz}, this is only possible if
$\sigma$ is a constant function. But that contradicts $\sigma \in L^1 (\mathbb{R}^{2n})$.

One may expect that by imposing the symplectic covariance over the
complete Cohen class (\ref{eq19}), we could obtain an alternative
(maybe simpler) proof of Theorem \ref{Theorem1} (roughly as
follows: conditions {\bf 1)-5)} imply that $\mathcal{Q}$ is of the
form (\ref{eq19}) and condition {\bf 6)} further implies that
$\sigma$ has to be the Dirac measure $\delta_z$). Unfortunately,
this is not so simple because conditions {\bf 1)-4)} also impose
some extra restrictions on the set of admissible distributions
$\sigma$, which are not easily described, but are decisive to pin
down the Wigner transform. The symplectic covariance alone,
imposed on the mapping (\ref{eq19}), leads to a condition for
$\sigma$ which is just the distributional generalization of
equation (\ref{eq25}) (which follows directly from the
distributional generalizations of equations
(\ref{eq23},\ref{eq24})). This condition is not sufficient to
select the unique solution $\sigma=\delta_z$, as one easily
realizes. In fact, all distributions $\Delta$ such that supp
$\Delta=\{0\}$ are also solutions of (\ref{eq25}). It may be
possible to remove these extra solutions by a careful
consideration of conditions {\bf 1)-4)}. However, our proof of
Theorem \ref{Theorem1} provides a more direct construction of the
uniqueness result.

\section{Covariance group of Wigner distributions under coordinate transformations}

In this section we prove the results summarized in the point II of
the Introduction. Most significative are Theorem
\ref{TheoremCovariance} (the pure state case) and Theorem
\ref{TheoremPurity2} (the mixed state case). The latter theorem ia
a definitive result about the characterization of quantum mappings
acting by coordinate transformations (diffeomorphisms).

Let us start with the following preparatory Lemma.

\begin{lemma}\label{LemmaJacobian}
Let $\mathcal{A}$ be one of the sets of Wigner distributions
$\W^2, \W_+^2$ or $\W_M$. Let the map $U_{\phi} \in
\mathcal{U}_{\A}$ be given by (\ref{eq2}). If $U_{\phi}$ is of the
form
\begin{equation}
U_{\phi} : \mathcal{A} \to \mathcal{A}, \label{eqJacobian1}
\end{equation}
then
\begin{equation}
J(z) \le 1,
\label{eqJacobian2}
\end{equation}
for all $z \in \mathbb{R}^{2n}$.
\end{lemma}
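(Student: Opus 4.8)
The plan is to exploit the $L^\infty$ bound from Lemma \ref{LemmaPurity1} together with the normalization-preserving property built into the definition of $U_\phi$. First I would fix an arbitrary point $z_0 \in \mathbb{R}^{2n}$ and use the second half of Lemma \ref{LemmaPurity1}: there is $f \in L^2(\mathbb{R}^n)$, with $\|f\|_{L^2}=1$, such that $|Wf(\phi(z_0))| = 2^n$. For each of the three choices of $\mathcal{A}$, the element $Wf$ (respectively a suitable diagonal/mixed element near it) lies in $\mathcal{A}$, so by hypothesis $U_\phi(Wf) \in \mathcal{A}$ as well. Applying the uniform bound \eqref{eqPurity1} to $U_\phi(Wf)$ gives
\begin{equation}
\left| \left(U_\phi (Wf)\right)(z_0) \right| = J(z_0)\, |Wf(\phi(z_0))| = J(z_0)\, 2^n \le 2^n,
\end{equation}
whence $J(z_0) \le 1$. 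Since $z_0$ was arbitrary, \eqref{eqJacobian2} follows.

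The point requiring care is that the function $f$ realizing the supremum $|Wf(\phi(z_0))|=2^n$ depends on the point $\phi(z_0)$, and one must check that the corresponding Wigner object actually belongs to the set $\mathcal{A}$ under consideration. For $\mathcal{A}=\W^2$ this is immediate since $Wf \in \W^2$ for any $f \in L^2$; for $\mathcal{A}=\W_+^2$ one normalizes $f$ so that $\|f\|_{L^2}=1$ and then $Wf \in \W_+^2$; for $\mathcal{A}=\W_M$ one notes $\W_+^2 \subset \W_M$, so $Wf=W\rho_f$ with $\rho_f$ the rank-one projector, which is a (pure) density matrix. In all three cases the hypothesis $U_\phi:\mathcal{A}\to\mathcal{A}$ then forces $U_\phi(Wf)\in\mathcal{A}\subset\W_M\cup\W_+^2$, so Lemma \ref{LemmaPurity1} applies to it and the $L^\infty$ bound $\le 2^n$ is legitimate.

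The main (minor) obstacle is really just making sure one is allowed to evaluate $U_\phi(Wf)$ pointwise at $z_0$: this is fine because $U_\phi(Wf)\in\mathcal{A}$ forces it to be continuous (indeed, as remarked after Definition \ref{DefinitionCT}, uniformly continuous), so the pointwise value $(U_\phi(Wf))(z_0)=J(z_0)\,Wf(\phi(z_0))$ is well defined and the inequality \eqref{eqPurity1} can be applied at that specific point. No regularity assumption on $\phi$ beyond the standing $C^1$ hypothesis is needed, and the Jacobian $J$ is continuous, so the conclusion $J(z)\le 1$ holds everywhere on $\mathbb{R}^{2n}$.
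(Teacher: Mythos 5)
Your proof is essentially the paper's own argument, just phrased directly rather than by contradiction: in both cases one fixes a point, uses the second half of Lemma \ref{LemmaPurity1} to pick $f$ with $|Wf(\phi(z_0))|=2^n$, and applies the bound (\ref{eqPurity1}) to $U_{\phi}(Wf)$ to force $J(z_0)\le 1$. The only blemish is your claim that $\mathcal{A}\subset \W_M\cup\W_+^2$ in all three cases, which is false for $\mathcal{A}=\W^2$ (unnormalized and cross-term elements $W(g,h)$ are not covered by Lemma \ref{LemmaPurity1}); the paper's proof silently relies on the same point for $\W^2$, so this is a shared imprecision rather than a deviation from, or a gap relative to, the paper's route.
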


\begin{proof}
Suppose, there exists $z_1 \in \mathbb{R}^{2n}$ such that
\begin{equation}
J(z_1) >1.
\label{eqPurity3}
\end{equation}
Let $z_0= \phi (z_1)$. By Lemma \ref{LemmaPurity1}, there exists $f \in L^2 (\mathbb{R}^{n})$ such that $|W f (z_0)|=2^n$. We thus have
\begin{equation}
|\left(U_{\phi} Wf \right)(z_1)| =|J(z_1) Wf (\phi(z_1))|> | Wf
(\phi(z_1))| = | Wf (z_0)| =2^n, \label{eqPurity4}
\end{equation}
which contradicts (\ref{eqPurity1}). Thus, we must have $J(z) \le 1$ for all $z\in \mathbb{R}^{2n}$.
\end{proof}

\begin{theorem}\label{TheoremCovariance}
Let $\mathcal{A}= \W^2, \W_+^2$ and let the operator $U_{\phi} \in
\mathcal{U}_{\A}$ be given by (\ref{eq2}). Then $U_{\phi}$ is a
map of the form
\begin{equation}
U_{\phi} : \A\to \A, \label{eq5.0}
\end{equation}
if and only if $\phi$ is given by:
\begin{equation}
\phi(z) =M z +a,
\label{eq5.1}
\end{equation}
with $a \in \mathbb{R}^{2n}$ and $M \in SpT(n)$.
\end{theorem}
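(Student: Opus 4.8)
The ``if'' direction is immediate: if $\phi(z) = Mz + a$ with $M \in SpT(n)$, then the symplectic covariance of the Wigner transform (together with the time-reversal symmetry $W(\overline{f},\overline{g})(x,-\omega) = W(f,g)(x,\omega)$ for the antisymplectic case) shows that $U_\phi$ maps $\W^2$ into $\W^2$ and $\W_+^2$ into $\W_+^2$; the Jacobian of a symplectic or antisymplectic matrix is $1$, so (\ref{eq2}) reduces to $F \mapsto F \circ \phi$. For the ``only if'' direction, suppose $U_\phi : \A \to \A$. I would proceed in several stages.

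\textbf{Step 1: the Jacobian is constant equal to $1$.} By Lemma \ref{LemmaJacobian} we already have $J(z) \le 1$ everywhere. To rule out $J(z) < 1$ somewhere, I would use the normalization-type constraints: for $\A = \W_+^2$ one has $\|Wf\|_{L^2}^2 = 1$ for all unit $f$ by (\ref{eqMoyalIdentity3}), and a change of variables gives $\|U_\phi Wf\|_{L^2}^2 = \int J(z)^2 |Wf(\phi(z))|^2 dz = \int J(\phi^{-1}(w)) |Wf(w)|^2 dw$, which must equal $1$ for every unit $f$; since $|Wf(w)|^2$ can be concentrated near any point (via translates/metaplectic images of a fixed state), this forces $J(\phi^{-1}(w)) = 1$, hence $J \equiv 1$. (For $\A = \W^2$ one argues similarly after noting the range contains enough normalized elements, or one reduces to the diagonal case.) Once $J \equiv 1$, the map is simply $F \mapsto F \circ \phi$.

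\textbf{Step 2: $\phi$ preserves Moyal pairings, hence is an isometry of a quadratic form.} With $J \equiv 1$, the substitution $w = \phi(z)$ is measure-preserving, so for diagonal elements $\langle U_\phi Wf, U_\phi Wg\rangle_{L^2} = \langle Wf, Wg\rangle_{L^2}$ automatically; the real content is that $U_\phi Wf \in \W_+^2$ must itself be a Wigner function, i.e. $Wf \circ \phi = Wh$ for some unit $h = h_f$. The key structural input is Moyal's identity (\ref{eqMoyalIdentity2}): $\langle Wh_f, Wh_g\rangle = |\langle h_f, h_g\rangle|^2 \ge 0$ must match $\langle Wf, Wg\rangle = |\langle f,g\rangle|^2$, and more importantly the rank-one / extremality structure is preserved. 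I expect the cleanest route is to exploit that $\phi$ takes the Grossmann–Royer ``hump'' $|Wf(z_0)| = 2^n$ to another such hump, and that the set of such maximizing configurations, together with the pointwise product structure of Wigner functions under $\star$, rigidly constrains $\phi$. Concretely, one shows $\phi$ maps the zero set / level structure of $W(\delta$-like states$)$ appropriately, and more robustly that the star-product identity $W(f,g) \star W(h,k)$ evaluated through $\phi$ forces $\phi$ to preserve the symplectic form up to sign.

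\textbf{Step 3: from symplectic-form preservation to affine-symplectic.} This is where I expect the main obstacle, and where the appendix on polynomials (mentioned in point III of the Introduction) enters. The strategy: pick $f$ a Gaussian, so $Wf$ is a Gaussian; then $Wf \circ \phi$ must again be the Wigner function of \emph{some} state, and by Hudson's Theorem (Theorem \ref{HudsonTheorem}) a nonnegative Wigner function which is also $L^2$ must be the Wigner function of a Gaussian, i.e. $Wf \circ \phi$ is itself Gaussian. Running this over a family of Gaussians, $\phi$ must pull back every Gaussian exponent (a positive-definite quadratic form plus linear terms) to another Gaussian exponent; writing $\phi = (\phi_1,\dots,\phi_{2n})$, this says each $\phi_i$ and each product $\phi_i\phi_j$ (and each $\phi_i^2$) agrees with a second-degree polynomial on all of $\RE^{2n}$. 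The appendix result then forces each $\phi_i$ to be an \emph{affine} function of $z$, so $\phi(z) = Mz + a$. Finally, $J \equiv 1$ gives $|\det M| = 1$, and preservation of the Wigner class (equivalently, the symplectic covariance being the only linear mechanism, by the linear result of \cite{Dias1} already available) pins down $M \in SpT(n)$. The delicate points to handle carefully are: (a) ensuring enough Gaussians are hit to apply the polynomial lemma — one needs the quadratic forms occurring to span, which is a matter of choosing centers and covariances freely; (b) the regularity bootstrap ($\phi \in C^1$ is given, but the Gaussian argument actually upgrades it to polynomial and then affine); and (c) treating $\W^2$ versus $\W_+^2$ uniformly — for $\W^2$ one uses non-diagonal $W(f,g)$ and the full Moyal identity (\ref{eqMoyalIdentity1}), but the Gaussian sub-argument already lives inside $\W_+^2 \subset \W^2$, so it suffices.
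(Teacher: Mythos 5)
Your overall route is the same as the paper's: test $U_{\phi}$ on Gaussian pure states, use Hudson's Theorem (Theorem \ref{HudsonTheorem}) to conclude the image is again a Gaussian Wigner function, deduce polynomial identities for quadratic expressions in the components of $\phi$, invoke the appendix lemma (Lemma \ref{LemmaPolynomials}) to get $\phi$ affine, and finish with the linear classification of \cite{Dias1}. However, two points you leave open are exactly where the paper has to work. First, the covariance matrices of pure Gaussian states cannot be ``chosen freely'': they are constrained to $Sp^+(n)$, and what the identity with a general $\Sigma^{-1}\in Sp^+(n)$ gives you directly is only that $\phi(z)\cdot \Sigma^{-1}\phi(z)$ is a quadratic polynomial in $z$ (after showing the center $z_{\Sigma}=\phi^{-1}(0)$ is independent of $\Sigma$, which itself uses Lemma \ref{LemmaJacobian}). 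Extracting from this that $\phi_j^2$, $\phi_{j+n}^2$ and $\phi_j\phi_{j+n}$ are separately quadratic requires an extra argument -- the paper differentiates along one-parameter families $\Sigma^{-1}(\vec\lambda)=\mathrm{diag}(D,D^{-1})$ and $\Sigma^{-1}(\epsilon)=e^{\epsilon A}$ with $A\in\textsf{sp}(n)\cap Sym(2n;\RE)$, and takes the limits $\lambda_j\to\infty$, $\lambda_j\downarrow 0$; your stronger claim that ``each $\phi_i$ agrees with a second-degree polynomial'' is not established by anything you wrote (and if it were, the appendix lemma would be superfluous). Second, Lemma \ref{LemmaPolynomials} has two branches: affine components, or proportional components. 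You must rule out proportionality of $\phi_j$ and $\phi_{j+n}$, which the paper does by noting that proportional components make two rows of the Jacobian matrix proportional, so $J\equiv 0$, contradicting the diffeomorphism hypothesis. Omitting this leaves the conclusion ``$\phi$ affine'' unproved.

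Concerning your Step 1: the justification that ``$|Wf|^2$ can be concentrated near any point'' is false as stated, since $\|Wf\|_{L^\infty}\le 2^n$ (eq.\ (\ref{eqPurity1})) prevents concentration of the unit $L^2$-mass at a point. The conclusion $J\equiv 1$ can be salvaged for $\A=\W_+^2$ from ingredients you already have (with $J\le 1$ from Lemma \ref{LemmaJacobian}, the nonnegative integrand $(1-J\circ\phi^{-1})\,|Wf|^2$ has zero integral, and a Gaussian $Wf$ is strictly positive), but for $\A=\W^2$ the image of a diagonal element need not be diagonal or normalized, so your norm argument needs the Hudson-based reduction you only gesture at. Note also that the paper does not need $J\equiv 1$ at this stage at all: the term $\ln J(z)$ is independent of $\Sigma$ and is eliminated by the differentiation in the Gaussian parameters, and $J\equiv 1$ is only established later, in the mixed-state Theorem \ref{TheoremPurity2}. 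Finally, your Step 2 (Moyal pairings, star-product, Grossmann--Royer maxima forcing preservation of the symplectic form) is not an argument and is in fact unnecessary, since once $\phi(z)=Mz+a$ is known, Theorem 1(ii) of \cite{Dias1} already yields $M\in SpT(n)$.
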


\begin{proof}
Sufficiency is well known. To prove necessity, we start by showing that $\phi$ has to be of the form (\ref{eq5.1}) with $M \in Gl (2n ; \mathbb{R})$, that is a linear transformation followed by a translation. Indeed, let $f$ be a normalized Gaussian pure state. Then the corresponding Wigner function
\begin{equation}
W f(z)=2^n e^{- 2 \pi z \cdot \Sigma^{-1} z}
\label{eq5}
\end{equation}
is a positive function. Here $\frac{1}{4 \pi} \Sigma$ is the
covariance matrix of the Gaussian measure, which has to satisfy
$\Sigma \in Sp^+(n)$ \cite{Dias1,Littlejohn}, that is: it is a
real symmetric positive-definite symplectic $2n \times 2n$ matrix.
By assumption, under the transformation
\begin{equation}
Wf (z) \mapsto (U_{\phi} W f)(z) =  2^n J(z) e^{-2 \pi \phi (z)
\cdot \Sigma^{-1} \phi(z)}, \label{eq6}
\end{equation}
we obtain another Wigner function $Wf^{\prime}$ for $f^{\prime}
\in L^2 (\mathbb{R}^n)$. But since $U_{\phi} $ amounts to a
coordinate transformation, $U_{\phi} (W f)$ is also everywhere
nonnegative. By Hudson's Theorem (Theorem \ref{HudsonTheorem}),
$U_{\phi} (W f)$ must be some other Gaussian Wigner function,
i.e.:
\begin{equation}
(U_{\phi} W f) (z) = 2^n e^{-2 \pi (z-z_{\Sigma})\cdot
\Lambda_{\Sigma}^{-1} (z-z_{\Sigma})}, \label{eq7}
\end{equation}
with $\Lambda_{\Sigma} \in Sp^+ (n)$ and $z_{\Sigma} \in \mathbb{R}^{2n}$.

Notice that from equating (\ref{eq6}) and (\ref{eq7}), we must have
\begin{equation}
J(z) >0,
\label{eq7.1}
\end{equation}
for all $z \in \mathbb{R}^{2n}$. It is then safe to take the logarithm of (\ref{eq6}) and (\ref{eq7}). We conclude that $ \phi $ has the property that, for any $\Sigma \in Sp^+ (n)$, there exist $\Lambda_{\Sigma} \in Sp^+ (n)$ and $z_{\Sigma} \in \mathbb{R}^{2n}$, such that:
\begin{equation}
\ln (J(z)) - 2 \pi \phi (z) \cdot \Sigma^{-1}\phi (z) = - 2 \pi (z-z_{\Sigma}) \cdot \Lambda_{\Sigma}^{-1} (z-z_{\Sigma}) .
\label{eq8}
\end{equation}
If we take $z= z_{\Sigma}$ in the previous equation, we obtain
\begin{equation}
\ln (J(z_{\Sigma})) = 2 \pi \phi (z_{\Sigma}) \cdot \Sigma^{-1}\phi (z_{\Sigma}).
\label{eq8.A}
\end{equation}
From Lemma \ref{LemmaJacobian}, we must have $\ln (J(z_{\Sigma})) \le 0$. But, since the matrix $\Sigma^{-1}$ is positive-definite, this is possible if and only if $\phi (z_{\Sigma})=0$, that is
\begin{equation}
z_{\Sigma}= \phi^{-1} (0).
\label{eq8.B}
\end{equation}
Hence, $z_{\Sigma}$ does not depend on the choice of matrix $\Sigma$.

Next, by choosing judiciously $n(2n+1)$ points $z \in \mathbb{R}^{2n}$ in eq.(\ref{eq8}), we derive a linear system of $n(2n+1)$ independent equations for the entries of the matrix $\Lambda_{\Sigma}^{-1}$. By solving this system, we obtain the entries of $\Lambda_{\Sigma}^{-1}$ as polynomia (of degree one) of the entries of the matrix $\Sigma^{-1}$.

Let $\vec{\lambda} = (\lambda_1, \cdots, \lambda_n) \in (\mathbb{R}^+)^n$ and $D=diag(\lambda_1, \cdots, \lambda_n)$. Then
\begin{equation}
\vec{\lambda} \mapsto \Sigma^{-1} (\vec{\lambda}) = \left(
\begin{array}{c c}
D & 0\\
0 & D^{-1}
\end{array}
\right),
\label{eq8.1}
\end{equation}
defines a smooth mapping to $Sp^+(n)$. This, in turn, determines another smooth mapping $\Lambda_{\Sigma}^{-1} (\vec{\lambda})$ to $Sp^+(n)$ by (\ref{eq8}). We thus have:
\begin{equation}
\ln (J(z)) - 2 \pi \phi (z) \cdot \Sigma^{-1}(\vec{\lambda}) \phi (z) = - 2 \pi (z-\phi^{-1}(0)) \cdot \Lambda_{\Sigma}^{-1}(\vec{\lambda}) (z-\phi^{-1}(0)) .
\label{eq8.2}
\end{equation}
Differentiating the previous equation with respect to $\lambda_j$ yields:
\begin{equation}
\phi_j^2 - \lambda_j^{-2} \phi_{j+n}^2 = (z-\phi^{-1}(0)) \cdot \left[\frac{\partial}{\partial \lambda_j} \Lambda_{\Sigma}^{-1} (\vec{\lambda})\right] (z-\phi^{-1}(0)) .
\label{eq8.3}
\end{equation}
Since the limit $\lambda_j \to + \infty$ exists on the left-hand side, so does the one on the right-hand side, and we obtain:
\begin{equation}
\phi_j^2 = (z-\phi^{-1}(0)) \cdot B_j (z-\phi^{-1}(0)) ,
\label{eq8.4}
\end{equation}
for all $j=1, \cdots, n$ and where
\begin{equation}
B_j := \lim_{\lambda_j \to + \infty} \frac{\partial}{\partial \lambda_j} \Lambda_{\Sigma}^{-1} (\vec{\lambda})
\label{eq8.5}
\end{equation}
The limit is obviously performed component-wise, by regarding the
$2n \times 2n $ matrices as elements of $\mathbb{R}^{4n^2}$. If we multiply (\ref{eq8.3}) by $\lambda_j^2$ and send $\lambda_j
\downarrow 0$, we obtain
\begin{equation}
\phi_{j+n}^2 = (z-\phi^{-1}(0)) \cdot B_{j+n} (z-\phi^{-1}(0)) ,
\label{eq8.6}
\end{equation}
where this time
\begin{equation}
B_{j+n} := - \lim_{\lambda_j \downarrow 0 } \lambda_j^2 \frac{\partial}{\partial \lambda_j} \Lambda_{\Sigma}^{-1} (\vec{\lambda})
\label{eq8.7}
\end{equation}
Thus, for all practical purposes, $\phi_j^2$ is a polynomial of degree at most $2$ of the variables $z$ for $j=1, \cdots, 2n$.

Next, recall that $\Sigma^{-1} \in Sp^+ (n)$ if and only if there exists $A \in \textsf{sp} (n) \cap Sym (2n; \mathbb{R})$ such that $\Sigma^{-1}= e^A$ (see Proposition 2.18 in \cite{Birk}).

Thus, for $\epsilon  \ge 0$, let
\begin{equation}
\Sigma^{-1} ( \epsilon) = e^{\epsilon A} ,
\label{eq9}
\end{equation}
with $A \in \textsf{sp} (n) \cap Sym (2n; \mathbb{R})$. It follows that $\Sigma^{-1} ( \epsilon)$ describes a smooth path in $Sp^+(n)$, for $\epsilon \ge 0$. Again, that will induce another smooth path $\Lambda_{\Sigma}^{-1} (\epsilon)$  on $Sp^+(n)$ for the matrix appearing on the right-hand side of (\ref{eq8}).

If we substitute $\Sigma^{-1} ( \epsilon)$ in (\ref{eq8}), differentiate with respect to $\epsilon$ and send $\epsilon \downarrow 0$, we obtain:
\begin{equation}
\phi (z) \cdot A \phi (z) = (z-\phi^{-1}(0)) \cdot B_A (z-\phi^{-1}(0)) ,
\label{eq10}
\end{equation}
for all $z \in \mathbb{R}^{2n} $, and where
\begin{equation}
 B_A:=  \lim_{\epsilon \downarrow 0} \frac{d}{d \epsilon} \Lambda_{\Sigma}^{-1} (\epsilon).
\label{eq11}
\end{equation}
Recall that $A \in \textsf{sp} (n)$ if and only if
\begin{equation}
A \mathcal{J} + \mathcal{J} A^T =0. \label{eq12}
\end{equation}
Thus $A$ has to be of the form
\begin{equation}
A= \left(
\begin{array}{c c}
a & b\\
c & - a^T
\end{array}
\right)
\label{eq13}
\end{equation}
with $a,b,c $ real $ n \times n$ matrices and $b,c$ symmetric.
Hence, $A \in \textsf{sp} (n) \cap Sym (2n; \mathbb{R})$ if and
only if
\begin{equation}
A= \left(
\begin{array}{c c}
a & b\\
b & -a
\end{array}
\right),
\label{eq14}
\end{equation}
where $a,b$ are real symmetric $n \times n $ matrices.

Next, choose $a=0$ and $b= E^{(jj)}$, $j=1, \cdots, n$. Here $E^{(jj)}$ denotes the diagonal $n \times n$ matrix, whose $ jj$-th entry is one and all the remaining vanish. Substituting on the left-hand side of (\ref{eq10}), we obtain:
\begin{equation}
2 \phi_j \phi_{j+n}.
\label{eq14.1}
\end{equation}
Thus for all $j=1, \cdots,n$, we have shown that $\phi_j^2$, $\phi_{j+n}^2$ and $\phi_j \phi_{j+n}$ are polynomials of degree lower or equal to $2$. From Lemma \ref{LemmaPolynomials} (see Appendix), we have two possibilities.

\vspace{0.3 cm}
\noindent
(i) Either $\phi_j$ and $\phi_{j+n}$ are both polynomials of degree $\le 1$, or

\vspace{0.3 cm}
\noindent
(ii) $\phi_{j+n}$ and $\phi_j$ are proportional to each other.

\vspace{0.3 cm}
\noindent
Suppose that for some $j=1, \cdots,n$ possibility (ii) holds, that is: there exists a constant $\alpha_j \in \mathbb{R}$ such that $\phi_{j+n}= \alpha_j  \phi_j$ (or {\it vice-versa}). Then we conclude that the rows $j$ and $j+n$ of the matrix
\begin{equation}
\left(\frac{\partial \phi_i}{\partial z_k} \right)_{1 \le i,k \le 2n}
\label{eq14.1}
\end{equation}
are proportional to each other. Consequently, the Jacobian (\ref{eq2.1}) vanishes identically, and this possibility has to be ruled out. Altogether, we have shown that $\phi$ has to be of the form (\ref{eq5.1}) with $M \in Gl (2n ; \mathbb{R})$. Finally, from Theorem 1 (ii) of \cite{Dias1} it follows that $M \in SpT(n)$, which concludes the proof.
\end{proof}

The previous result is trivially generalized if we also admit tempered distributions:

\begin{corollary}\label{CorollaryCovariance}
Let $\mathcal{A}= \W^{\prime}$ and let $U_{\phi} \in
\mathcal{U}_{\A}$ be given by (\ref{eq2}). Then $U_{\phi}$ is a
map of the form:
\begin{equation}
U_{\phi}: \W^{\prime} \to \W^{\prime}, \label{eq14.1.1}
\end{equation}
if and only if:
\begin{equation}
\phi(z) =M z +a,
\label{eq14.2}
\end{equation}
with $a \in \mathbb{R}^{2n}$ and $M \in SpT(n)$.
\end{corollary}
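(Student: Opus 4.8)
The plan is to bootstrap from Theorem \ref{TheoremCovariance} via the inclusions $\W_+^2\subset\W^2\subset\W'$. Sufficiency requires no new argument: for $\phi(z)=Mz+a$ with $M\in SpT(n)$ and $a\in\RE^{2n}$, the operator $U_\phi$ acts on the arguments of the Wigner transform as a metaplectic operator composed with a phase-space translation (and, in the antisymplectic case, with complex conjugation), and all of these extend to continuous maps of $\mathcal{S}'(\RE^n)$ — cf.\ the discussion following (\ref{eqIntroduction2.1}) — so $U_\phi$ maps $\W'$ into $\W'$.

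For necessity, assume $U_\phi:\W'\to\W'$ and probe it, exactly as in the proof of Theorem \ref{TheoremCovariance}, with a normalized Gaussian pure state $Wf(z)=2^n e^{-2\pi z\cdot\Sigma^{-1}z}$, $\Sigma\in Sp^+(n)$. Since $Wf\in\W_+^2\subset\W'$, the image $(U_\phi Wf)(z)=2^n J(z)e^{-2\pi\phi(z)\cdot\Sigma^{-1}\phi(z)}$ lies in $\W'$; it is non-negative (as $J\ge0$), continuous (as $\phi\in C^1$), and, by the normalization property of $U_\phi$ together with the change-of-variables formula, an $L^1$ function with total integral $1$. Being a non-zero element of $\W'$ that is a positive measure, Hudson's Theorem \ref{HudsonTheorem}(1) forces it to equal $c\,W(h)$ for a constant $c>0$ and a pseudo-Gaussian $h$ (Definition \ref{definitionHudson1}). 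The key observation is that the non-Gaussian pseudo-Gaussians are excluded here: a type (i) pseudo-Gaussian, and a semi-Gaussian whose quadratic exponent has degenerate real part, both have Wigner transforms supported on a proper affine subspace of $\RE^{2n}$, hence are not continuous; continuity of $U_\phi Wf$ therefore forces $h$ to be a genuine (decaying) Gaussian. Since a Gaussian pure state has a Gaussian Wigner function with symplectic covariance matrix (see \cite{Dias1}), using the unit integral to fix the constant we conclude that $U_\phi Wf$ has exactly the form (\ref{eq7}), namely $2^n e^{-2\pi(z-z_\Sigma)\cdot\Lambda_\Sigma^{-1}(z-z_\Sigma)}$ with $\Lambda_\Sigma\in Sp^+(n)$ and $z_\Sigma\in\RE^{2n}$.

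From this point the argument is word for word that of Theorem \ref{TheoremCovariance}. Equating (\ref{eq6}) and (\ref{eq7}) gives $J>0$ everywhere, as in (\ref{eq7.1}), and the functional equation (\ref{eq8}). The bound $J\le1$ — which in Theorem \ref{TheoremCovariance} was Lemma \ref{LemmaJacobian}, stated only for $\W^2,\W_+^2,\W_M$ — now follows by the same one-line argument in the $\W'$ setting: if $J(z_1)>1$, pick by Lemma \ref{LemmaPurity1} an $f$ with $|Wf(\phi(z_1))|=2^n$; then $|(U_\phi Wf)(z_1)|>2^n$, contradicting that $U_\phi Wf$ is a normalized Gaussian Wigner function, which has sup-norm $2^n$. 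Consequently $z_\Sigma=\phi^{-1}(0)$ independently of $\Sigma$, and differentiating (\ref{eq8}) with respect to the auxiliary parameters $\vec\lambda$ and $\epsilon$ introduced in that proof shows that $\phi_j^2$, $\phi_{j+n}^2$ and $\phi_j\phi_{j+n}$ are polynomials of degree at most $2$ for every $j$. Lemma \ref{LemmaPolynomials} then forces each $\phi_j$ to be affine — the alternative $\phi_{j+n}\propto\phi_j$ makes the Jacobian vanish identically — so $\phi(z)=Mz+a$ with $M\in Gl(2n;\RE)$, and Theorem 1(ii) of \cite{Dias1} finally gives $M\in SpT(n)$. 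Note that $\phi$ has thereby been shown to be affine, hence $C^1$, which is the assertion referred to after Definition \ref{DefinitionCT} in the case $\A=\W'$.

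The only place where the distributional case genuinely differs from Theorem \ref{TheoremCovariance}, and hence the main point to get right, is the Hudson step in the second paragraph: one must check that the extra freedom afforded by $\mathcal{S}'(\RE^n)$ in Hudson's theorem — namely the type (i) pseudo-Gaussians and the non-decaying semi-Gaussians — collapses as soon as one imposes continuity and $L^1$-membership on the image, precisely because their Wigner measures are genuinely singular (concentrated on lower-dimensional affine subspaces). Everything downstream of that observation is a faithful transcription of the pure-state argument.
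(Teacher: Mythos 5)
Your route is essentially the paper's: probe $U_{\phi}$ with a Gaussian pure state, use Hudson's Theorem together with the fact that $U_{\phi}$ acts by a coordinate transformation to force the image to be a Gaussian Wigner function of the form (\ref{eq7}), and then rerun the machinery of Theorem \ref{TheoremCovariance} and invoke Theorem 1(ii) of \cite{Dias1}. The paper's own proof is terser (it dismisses the pseudo-Gaussian alternative ``since $U_{\phi}$ acts as a coordinate transformation'' and then says the rest follows as before), so your explicit handling of the two points it glosses over --- the singularity of the Wigner measures of non-Gaussian pseudo-Gaussians, and the re-derivation of $J\le 1$, which Lemma \ref{LemmaJacobian} does not literally cover for $\A=\W^{\prime}$ --- is in the right spirit.

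However, your $J\le 1$ step has a gap as written. Lemma \ref{LemmaPurity1} only provides \emph{some} $f\in L^2(\RE^n)$ with $|Wf(\phi(z_1))|=2^n$ (any phase-space translate $\rho(\phi(z_1))g$ of an even unit-norm $g$ works, Gaussian or not), and for such an $f$ you are not entitled to the assertion that $U_{\phi}Wf$ is ``a normalized Gaussian Wigner function, which has sup-norm $2^n$'': in the $\W^{\prime}$ setting there is no a priori sup bound on elements of the range, and your Hudson argument establishes the Gaussian form of the image only when the \emph{input} is a Gaussian. The repair is immediate and stays inside your own scheme: choose $f$ to be the coherent state centered at $\phi(z_1)$, whose Wigner function $2^n e^{-2\pi |z-\phi(z_1)|^2}$ attains the value $2^n$ at $\phi(z_1)$, and observe that your second-paragraph analysis (Hudson, continuity, unit integral) applies verbatim to Gaussian pure states with arbitrary center, not only to those centered at the origin; then $|(U_{\phi}Wf)(z_1)|=J(z_1)2^n>2^n$ does contradict the bound $2^n$ on the image, which is now known to be a normalized Gaussian Wigner function. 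With this correction the argument is complete and coincides with the paper's proof.
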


\begin{proof}
Again, we start with the Gaussian (\ref{eq5}) and, upon the action
of $U_{\phi}$, we obtain (\ref{eq6}). We thus have a Wigner
distribution $U_{\phi} (W f) = W f^{\prime}$, for $f^{\prime} \in
\mathcal{S}^{\prime} (\mathbb{R}^n)$, which is everywhere
nonnegative. By Hudson's Theorem (Theorem \ref{HudsonTheorem}),
$f^{\prime}$ is either a pseudo-Gaussian or a Gaussian function.
Since $U_{\phi}$ acts as a coordinate transformation, then only
the latter hypothesis is possible and the rest of the proof
follows as before. This shows that the transformation $\phi$ is
the polynomial of (at most) degree one (\ref{eq14.2}) for some
matrix $M \in Gl(2n; \mathbb{R})$. Finally, since $U_{\phi} $ maps
$Wf$ with $f \in\mathcal{S} (\mathbb{R}^n)$ to some $W
f^{\prime}$, with $f^{\prime} \in \mathcal{S}^{\prime}
(\mathbb{R}^n)$, then again from Theorem 1 (ii) of \cite{Dias1},
it follows that $M \in SpT(n)$.
\end{proof}

Before we proceed let us make the following remark.

\begin{remark}\label{Remark1}
Since, in the proofs of the previous results, we basically used
the real Gaussian (\ref{eq5}) and applied Hudson's Theorem, we
also conclude that $U_{\phi}$ in (\ref{eq2}) is a map of the form
$U_{\phi}: \W_+^2 \to \W^{\prime}$, if and only if $\phi (z) = Mz
+ a$ with $a \in \mathbb{R}^{2n}$ and $M \in SpT (n)$.
\end{remark}

Finally, we prove the result for $U_{\phi}$ acting on the Wigner
functions of density matrices.

\begin{theorem}\label{TheoremPurity2}
Let $\mathcal{A}= \W_M$ and let $U_{\phi} \in \mathcal{U}_{\A}$ be
given by (\ref{eq2}). Then $U_{\phi}$ is a map of the form:
\begin{equation}
U_{\phi}: \W_M \to \W_M, \label{eqPurity3.1}
\end{equation}
if and only if $\phi$ is of the form:
\begin{equation}
\phi(z) =M z +a,
\label{eqPurity3.2}
\end{equation}
with $a \in \mathbb{R}^{2n}$ and $M \in SpT(n)$.
\end{theorem}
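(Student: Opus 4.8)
The plan is to reduce the mixed-state case to the already-established results, exploiting the fact that $\W_M$ contains all the pure-state Gaussians and that the argument in Theorem \ref{TheoremCovariance} used essentially only those Gaussians together with Hudson's Theorem. First I would note sufficiency is standard: if $\phi(z)=Mz+a$ with $M\in SpT(n)$, then $U_\phi$ corresponds (via the Shale--Weil relation and phase-space translation) to conjugation of the density matrix $\rho$ by a metaplectic or anti-unitary operator composed with a Heisenberg--Weyl operator, which preserves positivity, trace-class-ness and unit trace; in the antisymplectic case one additionally uses that complex conjugation of the kernel $K_\rho$ preserves these properties. So $U_\phi\rho$ is again a density matrix and $U_\phi(W\rho)\in\W_M$.

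For necessity, the key observation is that every normalized Gaussian pure state lies in $\W_M$ (since $\W_+^2\subset\W_M$), so the hypothesis $U_\phi:\W_M\to\W_M$ in particular forces $U_\phi(Wf)\in\W_M$ for every Gaussian $f$. Now $U_\phi(Wf)$ is, by construction, everywhere nonnegative (it is $J(z)\,Wf(\phi(z))$ with $Wf>0$ and $J\ge 0$), and it is the Wigner function of a density matrix, hence in particular of the form $W(h,h)$-combinations, so by Theorem \ref{TheoremWignerFunctions} it is in $L^2\cap C$ and by Hudson's Theorem (part 2, applied to the spectral decomposition, or more directly: a nonnegative element of $\W_M$ that is also in $L^1\cap L^2$) it must again be a Gaussian Wigner function --- indeed, if $\rho=\sum_\alpha p_\alpha \rho_{h_\alpha}$ has nonnegative Wigner function everywhere, one can argue it is a pseudo-Gaussian; combined with the integrability/continuity from Theorem \ref{TheoremWignerFunctions}, it is a genuine Gaussian. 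Granting this, I can run verbatim the argument of Theorem \ref{TheoremCovariance}: equation (\ref{eq6}) versus (\ref{eq7}) forces $J(z)>0$ everywhere, I take logarithms, and I use Lemma \ref{LemmaJacobian} (which was stated precisely for $\A\in\{\W^2,\W_+^2,\W_M\}$, so it applies here) to deduce $z_\Sigma=\phi^{-1}(0)$ independent of $\Sigma$. The differentiation in $\lambda_j$ and in the one-parameter subgroups $e^{\epsilon A}$, together with Lemma \ref{LemmaPolynomials} from the Appendix, shows $\phi_j$ and $\phi_{j+n}$ are affine (the proportional alternative is excluded because it kills the Jacobian), so $\phi(z)=Mz+a$ with $M\in Gl(2n;\RE)$.

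Finally, to pin down $M\in SpT(n)$, I would invoke Theorem 1(ii) of \cite{Dias1}: that result says precisely that if $W(f,g)(Mz)$ is a Wigner function for all $f,g\in L^2(\RE^n)$ then $M\in SpT(n)$. Here, once $\phi$ is affine, $U_\phi$ restricted to the pure states $\W_+^2\subset\W_M$ maps $Wf$ to (a translate and rescaling of) $Wf\circ M$, which must be in $\W_M$; since a pure state is sent to something whose purity $\|U_\phi Wf\|_{L^2}^2$ equals $\|Wf\|_{L^2}^2=1$ by the normalization (\ref{eq3}) plus the change of variables (the Jacobian of a linear map being constant, the $L^2$ norm scales by $|\det M|^{1/2}$ times the Jacobian factor, which cancel), $U_\phi Wf$ is in fact a pure Wigner function, hence in $\W_+^2$, and we are back in the situation of \cite{Dias1}, giving $M\in SpT(n)$.

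The main obstacle I anticipate is the step asserting that a nonnegative element of $\W_M$ which is also in $L^1\cap L^2\cap C$ must be a genuine Gaussian Wigner function rather than merely a pseudo-Gaussian or an infinite convex combination of Gaussians. Hudson's Theorem as stated (Theorem \ref{HudsonTheorem}) applies to $W(f,g)$ for a single pair $f,g$, not to a mixed state $\sum_\alpha p_\alpha Wf_\alpha$; a priori a convex combination of distinct Gaussian Wigner functions need not be Gaussian, and need not even be anywhere nonnegative, but conversely a mixed state could be everywhere nonnegative without being a single Gaussian (e.g. a thermal state is a Gaussian mixed state). However --- and this is the point to get right --- a thermal state's Wigner function $is$ still a Gaussian function (just not a pure one), so the conclusion ``$U_\phi(Wf)$ is a Gaussian function of $z$'' still holds; what one must verify carefully is that $any$ everywhere-nonnegative $F\in\W_M$ of the special product form $J(z)\,e^{-2\pi\phi(z)\cdot\Sigma^{-1}\phi(z)}$ is forced to be of Gaussian type, which one can do by noting that its logarithm, where defined, plus $\ln J$ equals a quadratic form, and then controlling $J$ via Lemma \ref{LemmaJacobian} and the integrability condition (iii) of Theorem \ref{TheoremWignerFunctions} to rule out pathologies; alternatively, one can bypass this entirely by first establishing that $U_\phi$ maps the $subset$ $\W_+^2$ into $\W_M$ and combining Remark \ref{Remark1}'s method with a purity argument to land back in $\W_+^2$, which is the cleaner route and is probably what the authors intend.
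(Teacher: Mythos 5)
There is a genuine gap at the central step. Your argument hinges on concluding that $U_{\phi}(Wf)$, for $f$ a Gaussian pure state, is again a Gaussian because it is an everywhere nonnegative element of $\W_M$. Hudson's Theorem (Theorem \ref{HudsonTheorem}) only constrains functions of the form $W(f,g)$ for a \emph{single} pair $f,g$; a nonnegative element of $\W_M$ need not be Gaussian at all (e.g.\ a convex combination $\tfrac12 Wf_1+\tfrac12 Wf_2$ of two coherent states centred at different points is nonnegative, lies in $\W_M$, and is not a Gaussian), and $\W_M\not\subset\W^{\prime}$, so neither part of Hudson's Theorem nor ``Remark \ref{Remark1}'s method'' applies to the image. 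You identify this obstacle yourself, but neither of your proposed repairs closes it: the ``logarithm plus $\ln J$ is quadratic'' observation is just the definition of $U_{\phi}$ and gives nothing unless you already know the image is Gaussian, and invoking the $\W_+^2\to\W_M$ statement (Remark \ref{Remark2}) would be circular, since in the paper it is a byproduct of the very proof you are trying to construct. Your backup purity computation is also incorrect: for affine $\phi(z)=Mz+a$ one has $\|U_{\phi}Wf\|_{L^2(\RE^{2n})}^2=|\det M|\,\|Wf\|_{L^2(\RE^{2n})}^2$ (the $L^1$ normalization (\ref{eq3}) does not control the $L^2$ norm), so purity is preserved only if $|\det M|=1$, which at that stage of your argument is not known — the hypothesis $U_{\phi}(Wf)\in\W_M$ only gives $|\det M|\le 1$.

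The paper's proof supplies exactly the missing ingredient, and it avoids Hudson's Theorem for mixed states altogether. One first proves $J(z)\equiv 1$ by a duality argument: if $J(z_2)<1$ on some ball (while $J\le 1$ everywhere by Lemma \ref{LemmaJacobian}), set $F=N\,\mathcal{G}\circ\phi^{-1}$ with $\mathcal{G}(z)=2^n e^{-2\pi|z|^2}$ and $N=(\int\mathcal{G}(u)J(u)du)^{-1}>1$; then $F(\phi(0))=N2^n>2^n$ violates the bound of Lemma \ref{LemmaPurity1}, so $F\notin\W_M$, hence by Theorem \ref{TheoremWignerFunctions}(iv) there is a pure state $Wf$ with $\int F\,Wf\,dz<0$, and a change of variables turns this into $\int\mathcal{G}\,(U_{\phi}Wf)\,dz<0$, contradicting $U_{\phi}Wf\in\W_M$. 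With $J\equiv 1$ the map $U_{\phi}$ preserves the $L^2$ norm, hence purity, so it sends $\W_+^2$ into $\W_+^2$, and only then does one invoke the pure-state result (Theorem \ref{TheoremCovariance}) to conclude $\phi(z)=Mz+a$ with $M\in SpT(n)$. Without this (or an equivalent) argument forcing $J\equiv 1$ first, your reduction to the pure-state case does not go through.
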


\begin{proof}
We start by showing that if $U_{\phi}$ is of the form
(\ref{eqPurity3.1}), then we must have
\begin{equation}
J(z)=1,
\label{eqPurity3}
\end{equation}
for all $z \in \mathbb{R}^{2n}$. From Lemma \ref{LemmaJacobian}, we already know that we must have $J(z) \le 1$, for all $z \in \mathbb{R}^{2n}$.

Next, suppose there exists $z_2 \in  \mathbb{R}^{2n}$ such that
\begin{equation}
J(z_2) <1.
\label{eqPurity5}
\end{equation}
Since $J(z)$ is a continuous function, there exists an open ball $B_{\epsilon} (z_2)$, for some $\epsilon>0$, such that
\begin{equation}
J(z) <1,
\label{eqPurity6}
\end{equation}
for all $z \in B_{\epsilon} (z_2)$.

Consider the Gaussian measure
\begin{equation}
\mathcal{G} (z) = 2^n e^{-2 \pi |z|^2}.
\label{eqPurity7}
\end{equation}
This is the Wigner function $\mathcal{G}=W f$ of the normalized
gaussian state
\begin{equation}
f(x) = 2^{n/4} e^{- \pi |x|^2}.
\label{eqPurity8}
\end{equation}
Next define
\begin{equation}
F(z) = N \mathcal{G} \left( \phi^{-1} (z) \right) ,
\label{eqPurity9}
\end{equation}
where
\begin{equation}
N = \left( \int_{\mathbb{R}^{2n}} \mathcal{G} \left( \phi^{-1} (z) \right) dz \right)^{-1}.
\label{eqPurity10}
\end{equation}
Clearly, $F$ is a real and normalized function. It also belongs to $L^2 (\mathbb{R}^{2n})$:
\begin{equation}
\begin{array}{c}
\int_{\mathbb{R}^{2n}} |F(z)|^2 dz = N^2 \int_{\mathbb{R}^{2n}} | \mathcal{G} \left(\phi^{-1} (z) \right)|^2 dz = \\
\\
= N^2 \int_{\mathbb{R}^{2n}} | \mathcal{G} (u)|^2 J(u) du \le N^2 || \mathcal{G}||_{L^2 (\mathbb{R}^{2n})}^2,
\end{array}
\label{eqPurity11}
\end{equation}
where we performed the substitution $u=  \phi^{-1} (z)$ and used the fact that $J(u) \le 1$ everywhere.

However, $F$ cannot be a Wigner function, as we now show. Indeed, from (\ref{eqPurity6}) and the fact that $\mathcal{G}$ is everywhere nonnegative:
\begin{equation}
N = \left( \int_{\mathbb{R}^{2n}} \mathcal{G} (u) J(u) du \right)^{-1} > \left( \int_{\mathbb{R}^{2n}} \mathcal{G} (u) du \right)^{-1}=1.
\label{eqPurity12}
\end{equation}
Let $z_3 = \phi(0)$. Then:
\begin{equation}
F(z_3) = N \mathcal{G} \left(\phi^{-1} (z_3) \right) =  N \mathcal{G} (0) =N 2^n > 2^n,
\label{eqPurity13}
\end{equation}
which contradicts (\ref{eqPurity1}). Hence, $F$ is not a Wigner function. From Theorem \ref{TheoremWignerFunctions}, we conclude that there exists at least one Wigner function $W f$ such that
\begin{equation}
\int_{\mathbb{R}^{2n}} F(z) W f (z) dz < 0.
\label{eqPurity14}
\end{equation}
On the other hand:
\begin{equation}
\begin{array}{c}
\int_{\mathbb{R}^{2n}} F(z) W f (z) dz  = N \int_{\mathbb{R}^{2n}} \mathcal{G} \left(\phi^{-1} (z) \right) W f (z) dz = \\
\\
=N \int_{\mathbb{R}^{2n}} \mathcal{G} (u) W f \left(\phi(u)
\right) J(u) du = N \int_{\mathbb{R}^{2n}} \mathcal{G} (u)
\left(U_{\phi} W f \right) (u)  du.
\end{array}
\label{eqPurity15}
\end{equation}
and since $\mathcal{G}$ is a Wigner function, it follows from
(\ref{eqPurity14}) that $U_{\phi} (W f)$ cannot be a Wigner
function. Hence (\ref{eqPurity5}) cannot hold.

So, if $J(z)=1$ everywhere, then $U_{\phi}$ preserves the purity
$||U_{\phi}(W \rho)||_{L^2 (\mathbb{R}^{2n})}= ||W \rho||_{L^2
(\mathbb{R}^{2n})}$. Hence, $U_{\phi}$ maps pure states to pure
states and the rest of the proof follows from Theorem
\ref{TheoremCovariance}.
\end{proof}

\begin{remark}\label{Remark2}
It also follows from the proof of the previous Theorem that
$U_{\phi}$ is a map of the form $U_{\phi}: \W_+^2 \to \W_M$ if and
only if $\phi (z) = Mz +a$, with $ a \in \mathbb{R}^{2n}$ and $M
\in SpT(n)$.
\end{remark}

\section*{Appendix}

In this Appendix we prove a lemma for polynomials on $\RE^n$. It
is a simple result that looks quite natural, but we were unable to
find it in the literature. Since it plays an important part in the
derivation of Theorem \ref{TheoremCovariance}, we will prove it
here for completeness.

Let us define the spaces $\P_k$, $k\in \mathbb{N}_0$, of real polynomials on $\RE^n$, of degree at most $k$: $f
\in \P_k$ iff it is of the form
\begin{equation}
f: \RE^n \longrightarrow \RE; \quad f(x_1,...,x_n)= \sum_{|\alpha|=0}^k C_\alpha x_1^{\alpha_1}\cdot \cdot \cdot
x_n^{\alpha_n}, \label{appendix1}
\end{equation}
where $\alpha =(\alpha_1, \cdots, \alpha_n) \in \mathbb{N}^n$ is a multiindex, $|\alpha|=\sum_{i=1}^n \alpha_i$, and $C_\alpha \in \RE$. The highest value of $|\alpha|$ for which
$C_\alpha \not=0$ is the {\it degree} of $f$, denoted by deg $(f)$. If $f \in \P_k$ then, of course, deg $(f)
\le k$.

Let us also define the set $\widetilde\P_2$ of second degree polynomials $f \in \P_2$ for which there is $f_1
\in \P_1 \backslash \P_0$ such that $f =f_1^2$.

Finally, recall that every $f \in \P_k$ can be factorized in the following way: $f=f_1 \cdot \cdot \cdot f_s$
$(s \le k)$, where all $f_j \in \P_k$, $j=1,...,s$, are irreducible polynomials (i.e. cannot be factorized into products of
lower degree polynomials); deg $(f)= \sum_{j=1}^s$ deg $(f_j)$ and the factorization is unique up to
multiplications of the factors by real numbers.

We now state our result:

\begin{lemma}\label{LemmaPolynomials}
Let $f,g: \mathbb{R}^n \to \mathbb{R}$ be two continuous functions, such that

$f^2,  g^2,  f \cdot g \in \P_2$.
Then one of the following two possibilities must hold:

\vspace{0.3 cm} \noindent {\bf A)} $f,g \in \P_1$, or

\vspace{0.3 cm} \noindent {\bf B)} there exist $(\lambda, \mu) \in \mathbb{R}^2 \backslash \{(0,0)\}$ such that
$\lambda f + \mu g =0$.
\end{lemma}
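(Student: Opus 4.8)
The plan is to reduce everything to elementary facts about real polynomials, unique factorization in $\mathbb{R}[x_1,\dots,x_n]$, and the connectedness of open half-spaces, treating $f$ and $g$ as symmetrically as possible. First I would dispose of the degenerate cases: if $f\equiv0$ or $g\equiv0$ then {\bf B)} is immediate, and if (say) $f^2$ is a nonzero constant then, $\mathbb{R}^n$ being connected and $f$ continuous, $f$ is itself constant, hence $f\in\P_0\subset\P_1$; plugging $fg\in\P_2$ and $g^2\in\P_2$ back in then forces $g\in\P_1$ as well, so {\bf A)} holds. Thus I may assume $f,g\not\equiv0$ and that $f^2$ and $g^2$ are nonconstant; being nonnegative elements of $\P_2$, each then has degree exactly $2$, since a nonconstant polynomial of degree one changes sign.

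Next I would analyse $f^2$ through its factorization into irreducibles. If $f^2$ is irreducible, then from the polynomial identity $(fg)^2=f^2\,g^2$ we get $f^2\mid (fg)^2$, hence $f^2\mid fg$ by irreducibility, and a degree count gives $fg=c\,f^2$ for a constant $c$; since $f^2\not\equiv0$ its zero set has empty interior, so $f\,(g-cf)=0$ yields $g=cf$ on a dense set, hence everywhere — this is {\bf B)}. If $f^2$ is reducible, it is a product of two nonconstant linear forms; here I would isolate a sublemma: a nonnegative product of two nonconstant linear forms must have proportional factors (proof: if the factors are not proportional, either the two zero hyperplanes are transversal and crossing one of them away from the other flips a sign, or they are parallel and distinct and the product restricts to a one-variable quadratic with two distinct real roots — either way the product changes sign). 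Consequently $f^2=q^2$ with $q\in\P_1\setminus\P_0$. On the connected open half-space $\{q>0\}$ the continuous function $f/q$ takes values in $\{\pm1\}$, hence is a constant $\varepsilon_+$, and likewise $f=\varepsilon_-\,q$ on $\{q\le0\}$; if $\varepsilon_+=\varepsilon_-$ then $f\in\P_1$, and otherwise, replacing $f$ by $-f$ (harmless for both the hypotheses and the conclusion), $f=|q|$. The same analysis applied to $g$ shows that $g^2$ cannot be irreducible (the symmetric argument would give $f=cg$, making $g^2=(q/c)^2$ reducible) and that $g\in\P_1$ is impossible unless $g\equiv0$ (since $|q|\,g$ would coincide with $qg$ on $\{q>0\}$ and with $-qg$ on $\{q<0\}$, forcing $qg\equiv-qg$); so, possibly after replacing $g$ by $-g$, I may assume $g=|r|$ with $r\in\P_1\setminus\P_0$.

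It then remains to handle $f=|q|$, $g=|r|$. Now $fg=|qr|$ lies in $\P_2$; since $|qr|$ agrees with the polynomial $qr$ on the open set $\{qr>0\}$ whenever that set is nonempty, and $qr\le0$ throughout otherwise, $qr$ has constant sign on $\mathbb{R}^n$. Applying the sublemma once more (to $qr$ or to $q(-r)$), $q$ and $r$ are proportional, say $r=\lambda q$ with $\lambda\neq0$, whence $g=|r|=|\lambda|\,f$, which is once again {\bf B)}. The step I expect to be the main obstacle is the reducible case: one must notice that $f^2\in\P_2$ alone does \emph{not} force $f\in\P_1$ — it leaves open $f=\pm|q|$, which is precisely the source of the non-triviality of the lemma — and then genuinely exploit $fg\in\P_2$ together with $g^2\in\P_2$ (not just $f^2\in\P_2$) to eliminate every remaining configuration except $r\parallel q$. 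The polynomial sublemma about products of linear forms, though elementary, is the real workhorse and deserves to be stated and proved separately.
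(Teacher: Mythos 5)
Your proposal is correct in substance and, while it lives in the same elementary world as the paper's proof (unique factorization of the quadratics $f^2,\,g^2,\,fg$ in $\mathbb{R}[x_1,\dots,x_n]$ together with continuity and sign arguments), it is organized along a genuinely different route. The paper works with $F=f^2$, $G=g^2$, $H=fg$ and exploits the identity $FG=H^2$, splitting according to whether $F,G$ lie in $\widetilde\P_2$ or are irreducible and according to the factorization type of $H$; you instead classify $f^2$ alone: in the irreducible case you use primality to get $f^2\mid fg$, hence $fg=cf^2$ and then $g=cf$ by density of $\{f\neq0\}$ and continuity, and in the reducible case you reduce to $f=\pm q$ or $f=\pm|q|$ and dispose of the mixed configurations by the observation that two polynomials agreeing on a nonempty open set agree everywhere. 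Your sublemma (a nonnegative product of two nonconstant degree-one polynomials has proportional factors) is only used parenthetically in the paper, and your half-space agreement trick replaces the paper's explicit solving of $h^2\cdot G=H^2$; both work, and your handling of the configurations where one of the functions is of the type $|q|$ is arguably more transparent than the paper's listing of solution pairs.

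There is, however, one branch you never bring to a conclusion: the case where $f^2$ is reducible and $\varepsilon_+=\varepsilon_-$, i.e. $f=\pm q\in\P_1$. After that sentence your analysis of $g$ is carried out only under the standing assumption $f=|q|$ (both parenthetical justifications, that $g^2$ cannot be irreducible and that $g\in\P_1$ is impossible, presuppose it), so a genuine degree-one $f$ is left hanging; it cannot simply be excluded, since this is precisely where alternative \textbf{A)} lives. Fortunately it closes with the role-swapped versions of arguments you already wrote: if $g^2$ is irreducible, the symmetric divisibility argument gives $f=cg$, which is \textbf{B)}; if $g^2$ is reducible, then $g=\pm r$ (giving \textbf{A)}, since $f\in\P_1$) or, after a sign flip, $g=|r|$, which is impossible because the polynomial $fg$ would agree with $fr$ on $\{r>0\}$ and with $-fr$ on $\{r<0\}$, forcing $fr\equiv0$ and hence $f\equiv0$, already excluded. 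With that short paragraph added, your proof is complete.
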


\begin{proof}

Let us define $F:=f^2$, $G:=g^2$ and $H:=f \cdot g$. Notice that $F,G$ can belong to $\P_0$ or to $\P_2$, but not
to $\P_1 \backslash \P_0$ (in which case they would not be everywhere nonnegative). We have several cases:

\vspace{0.3 cm} \noindent {\bf Case 1:} $F,G \in \P_0$.

This case is trivial: $f,g \in \P_0 \subset \P_1$ and so {\bf A} holds.

\vspace{0.3 cm} \noindent {\bf Case 2:} $G \in \P_0$ and $F \in \P_2 \backslash \P_1$ or vice-versa.

We have $g \in \P_0$ and since $H=g \cdot f \in \P_2$, we also have $f \in \P_2$. Since $F \in \P_2 \backslash \P_1$, this implies
that $f \in \P_1$ and so {\bf A} is satisfied.

\vspace{0.3 cm} \noindent {\bf Case 3:} $F,G \in \P_2 \backslash \P_1$. We have two sub-cases:

\vspace{0.1 cm} {\bf Sub-case 3.1:} $F \in \widetilde\P_2$ (or $G \in \widetilde\P_2$).

Since $F \in \widetilde\P_2$, there is $h \in \P_1 \backslash \P_0$ such that $h^2=F$. Hence, $f^2=h^2$. The only continuous
solutions of this equation are $f=\pm h$ and $f=\pm |h|$.

Next we notice that $H \in \P_2 \backslash \P_0$, and so one of the following possibilities holds:
\begin{equation}
(i) \, H=h_1^2 \quad , \quad (ii)\, H=h_2 \cdot h_3 \quad , \quad (iii)\, H \mbox{ is not factorisable,}  \label{Appendix2}
\end{equation}
where $h_1,h_2,h_3 \in \P_1 \backslash \P_0$, and $h_2$ is not proportional to $h_3$. Since $F \cdot G =H^2$, we have for $G$
\begin{equation}
(i) \, h^2 \cdot G=h_1^4 \quad , \quad (ii) \, h^2 \cdot G=h_2^2 \cdot h_3^2 \quad , \quad (iii) \, h^2 \cdot
G=H^2. \label{Appendix3}
\end{equation}
The solutions are
\begin{equation}
(i) \, \left\{ \begin{array}{l} G= k h_1^2 \\
h^2=h_1^2/k \end{array} \right. \quad , \quad (ii) \, \left\{ \begin{array}{l} G= k h_3^2 \\
h^2=h_2^2/k \end{array} \right. \quad \mbox{or} \quad
\left\{ \begin{array}{l} G= k h_2^2 \\
h^2=h_3^2/k \end{array} \right. \label{Appendix4}
\end{equation}
where $k \in \RE_+$ and we used the fact that $G$ and $h^2$ are non-negative polynomials. The case (iii) of
equation (\ref{Appendix3}) has no solutions (i.e. there is no $G \in \P_2$ for which (iii) holds).

We conclude from (\ref{Appendix4}) that $G \in \widetilde P_2$. Moreover, we get (using the continuity of $f$
and $g$):
\begin{equation}
(i) \, \left\{ \begin{array}{l} g= \pm \sqrt{k} h_1 \\
f=\pm h_1/\sqrt{k} \end{array} \right. \quad \mbox{or} \quad
\left\{ \begin{array}{l} g= \pm \sqrt{k} |h_1| \\
f=\pm |h_1|/\sqrt{k} \end{array} \right. , \label{Appendix5}
\end{equation}
\begin{equation}
(ii) \, \left\{ \begin{array}{l} g= \pm \sqrt{k} h_3 \\
f=\pm h_2/\sqrt{k} \end{array} \right. \quad \mbox{or} \quad
\left\{ \begin{array}{l} g= \pm \sqrt{k} h_2 \\
f=\pm h_3/\sqrt{k} \end{array} \right. \label{Appendix6}
\end{equation}
In the case (ii), we cannot have $f \propto |h_2|$, nor $g \propto |h_3|$ because then $f \cdot g \notin \P_2$.

We conclude that in case (i) {\bf B} holds, whereas case (ii) implies {\bf A}.

\vspace{0.1 cm} {\bf Sub-case 3.2:} $F,G \in \P_2 \backslash \widetilde\P_2$.

In this case $F$ and $G$ are irreducible (notice that if e.g. $F=f_1 \cdot f_2$, with $f_1,f_2 \in \P_1$, and $f_1$ not proportional to $f_2$, then $F$ would not be non-negative). Since $F \cdot G =H^2$ we get
\begin{equation}
F= k | H | \quad \mbox{and} \quad G= |H|/k \quad , \quad k \in \RE_+  \label{Appendix6}
\end{equation}
Notice that $|H| \in \P_2$. We conclude that $H \in \P_2 \backslash \widetilde\P_2$ and
\begin{equation}
f=\pm \sqrt{k} \sqrt{|H|} \quad \mbox{and} \quad g=\pm \sqrt{|H|}/\sqrt{k} \quad , \quad k \in \RE_+
\label{Appendix7}
\end{equation}
Hence, {\bf B} holds and we have concluded the proof.
\end{proof}

\section*{Acknowledgements}

The work of N.C. Dias and J.N. Prata is supported by the
Portuguese Science Foundation (FCT) grant PTDC/MAT-CAL/4334/2014.

***************************************************************

\noindent \textbf{Author's addresses:}

\noindent \textbf{\ Nuno Costa Dias}, \textbf{Jo\~{a}o Nuno Prata
}:  Grupo de F\'{\i}sica
Matem\'{a}tica, Departamento de Matem\'atica, Faculdade de Ci\^encias, Universidade de Lisboa, Campo Grande, Edif\'{\i}cio C6, 1749-016 Lisboa, Portugal, and Escola Superior N\'autica Infante D. Henrique. Av.
Eng. Bonneville Franco, 2770-058 Pa\c{c}o d'Arcos, Portugal

\end{document}